\documentclass[a4paper,USenglish,thm-restate]{lipics-v2021}

\hideLIPIcs
\nolinenumbers

\usepackage[utf8]{inputenc}
\usepackage{graphicx} 
\usepackage{amsmath,amsthm}
\usepackage{algorithm,float,algorithmicx}
\usepackage[noend]{algpseudocode}
\usepackage[dvipsnames]{xcolor}
\usepackage{ifthen}

\usepackage{tikz}
\usetikzlibrary{positioning,fit}

\newcommand{\m}[1]{\ensuremath{\mathcal{#1}}}

\newcommand{\AP}{\mathsf{AP}}
\newcommand{\sat}{\models}
\newcommand{\set}[1]{\left\{ #1 \right\}}
\newcommand{\true}{\mathit{true}}
\newcommand{\false}{\mathit{false}}
\newcommand{\APop}{\mathsf{A}_{\mathit{op}}}
\newcommand{\AParg}{\mathsf{A}_{\mathit{arg}}}
\newcommand{\Met}{\mathit{Met}}
\newcommand{\Arg}{\mathit{Arg}}
\newcommand{\Ret}{\mathit{Ret}}

\NewDocumentCommand{\K}{O{}O{}}{%
  \ifx\relax#1\relax
    \ifx\relax#2\relax
      \mathsf{K}%
    \else
      \mathsf{K} \medspace #2%
    \fi
  \else
    \ifx\relax#2\relax
      \mathsf{K}_{#1}%
    \else
      \mathsf{K}_{#1} \medspace #2%
    \fi
  \fi
}

\newcommand{\APin}{\ensuremath{\mathsf{A}_\mathit{input}}}

\newcommand{\range}[3]{\ensuremath{[#1/2^{#3},#2/2^{#3}]}}

\title{Non-Leaking Concurrent Objects}

\author{Hagit Attiya}{Technion, Haifa, Israel}{hagit@cs.technion.ac.il}{https://orcid.org/0000-0002-8017-6457}{Supported by the Israel Science Foundation (22/1425 and 25/1849).}
\author{Rotem Oshman}{Tel-Aviv University, Tel-Aviv, Israel}{roshman@tau.ac.il}{https://orcid.org/0009-0007-5065-5557}{Supported by NSF-BSF 2022699.}
\author{Noa Schiller}{Tel-Aviv University, Tel-Aviv, Israel}{noaschiller@mail.tau.ac.il}{https://orcid.org/0009-0007-0285-6194}{Supported by NSF-BSF 2022699.}
\author{Corentin Travers}{Aix Marseille Univ, CNRS, LIS, Marseille, France}{corentin.travers@univ-amu.fr}{https://orcid.org/0000-0002-6797-4542}{Supported by ANR-23-PECL-0009 TRUSTINCloudS.}
\authorrunning{H. Attiya, R. Oshman, N. Schiller, and C. Travers}

\ccsdesc[500]{Theory of computation~Distributed algorithms}

\keywords{non-leaking implementations, epistemic logic, knowledge, 
concurrent objects, decision tasks}

\begin{document}
\maketitle

\begin{abstract}
Abstract specifications of concurrent objects determine which values operations may return, but they also implicitly constrain which information operations may know,
for example the arguments of other operations that do not affect their outcome, or even whether such operations occurred. 
Concrete implementations, while correct with respect to the abstract specification, may nonetheless expose additional information through their internal coordination mechanisms.

We introduce a framework for reasoning about information leakage in concurrent implementations. 
The framework uses epistemic logic to compare what a process may know under an abstract specification with what it may know in a concrete implementation, using the abstract object itself as the reference for permissible observations.
This yields several notions of \emph{non-leaking} implementations.

Using this framework, we investigate both the possibilities and limitations of non-leaking implementations. 
We present fully-non-leaking wait-free implementations of multi-valued registers and bounded max registers, but show that a fully-non-leaking unbounded max register cannot be implemented in a wait-free manner from finite-state base objects. 
We then consider a weaker guarantee, obtaining argument-non-leaking implementations of stacks, queues, and approximate agreement.
These results demonstrate that non-leakage guarantees are often compatible with correctness and progress requirements, while also indicating their limitations.
\end{abstract}

{\tableofcontents}


\section{Introduction}

Abstract specifications of concurrent objects describe the observable behavior of operations: which values operations may return and how these values relate to one another.
At the same time, such specifications implicitly constrain what an operation \emph{does not observe}.
For example, reading a variable in the abstract object does not reveal which values were written and later overwritten; popping an element from a stack does not reveal which other elements were concurrently pushed.
In this sense, abstract objects specify not only the information that must be preserved by an implementation, but also information that is intentionally left hidden.
Concrete implementations, even when correct with respect to the abstract behavior, may nonetheless expose this 
additional information through their internal coordination or contention mechanisms.

To see how the abstract specification determines which information an operation is allowed to infer, consider two examples. 
In a \emph{counter} object supporting \textsc{fetch\&inc}, 
if one operation returns $5$ and a later operation returns $8$, then in every execution of the abstract counter there are exactly two intervening \textsc{fetch\&inc} operations returning $6$ and $7$.
Thus, it is legitimate for a concrete implementation to reveal the existence of these intervening operations, since it is already implied by the abstract behavior.
In contrast, in a \emph{max register}, 
if one \textsc{readMax} returns $5$ and a later \textsc{readMax} returns $8$, the abstract specification does not imply that any intermediate \textsc{writeMax} operations occurred; the increase may be due to a single \textsc{writeMax}$(8)$.
Revealing the existence or arguments of such intermediate \textsc{writeMax} 
would therefore expose information that is not mandated by the abstract object.
For example, if a leaky max register is used for a \emph{highest-bidder wins} auction, 
it can inadvertently reveal the price offers of non-winning bids, which does not happen with the abstract max register.
These examples illustrate how the abstract specification itself determines which information may or may not be observable in a correct implementation.

This paper initiates the study of concurrent implementations that preserve these information-hiding guarantees. 
To formalize such guarantees, we introduce a semantic framework based on \emph{epistemic logic}, a standard formalism for reasoning about knowledge~\cite{FaginHMV2004}. 
At a high level, the framework compares what a process may learn under an abstract specification with what it may learn in a concrete implementation. 
The key idea is that the abstract object itself serves as the reference for permissible information exposure
and the implementation should not reveal more than that. 
That is, a non-leaking implementation does not reveal information about other operations beyond what is logically implied by the abstract specification and by the operation's own return value.
Returning to the examples above, the existence of intermediate \textsc{fetch\&inc} operations is implied by the abstract counter semantics, whereas the existence of intermediate \textsc{writeMax} operations is not.

In different situations we might want---or be able---to hide different types of information.
For example, a process may learn that an operation occurred, but not which arguments it carried.
To capture that, the framework is parameterized by the type of knowledge under consideration. 
In particular, we define two notions of \emph{non-leaking} implementations. 
In a \emph{fully-non-leaking} implementation, processes learn nothing about individual operations beyond what is implied by the abstract specification. 
In particular, they do not learn whether other operations occurred, nor do they learn their arguments or return values. 
A weaker notion, \emph{argument-non-leaking}, permits processes to learn that operations have occurred while preventing them from learning the identity of the invoking process or the arguments and return values of the operation. 
More generally, the framework allows one to express and reason about a spectrum of information-hiding guarantees for concurrent objects.

One of the central purposes of abstraction is to hide implementation details and allow clients to reason about an object solely through its specification. 
When an implementation reveals information that is not determined by the abstract object, it breaks this abstraction boundary: 
a client may distinguish between behaviors that are equivalent at the abstract level, 
and may know information through implementation mechanisms intended to deal with concurrency and contention.
A non-leaking implementation preserves the information-hiding guarantees implicit in the abstract specification by ensuring that operations do not acquire knowledge beyond what the abstraction itself exposes.

We investigate possibilities and limitations of information-hiding implementations. 
We present fully-non-leaking wait-free implementations of \emph{multi-valued registers} and \emph{bounded max registers}, but show that an \emph{unbounded} fully-non-leaking max register cannot be implemented in a wait-free manner from finite-state base objects. 
We then consider the weaker guarantee, and present an argument-non-leaking wait-free stack and lock-free queue. 
We also extend the framework to decision tasks, obtaining an argument-non-leaking wait-free approximate agreement algorithm. 
These results show that information-hiding guarantees can often be achieved without sacrificing standard correctness and progress properties, while also identifying inherent limitations of the strongest form of information hiding.

\subparagraph*{Related Work.} 

Traditional correctness notions for concurrent implementations 
are based on \emph{refinement}~\cite{Lamport86},
and they characterize \emph{admissible executions and return values}.
They do not constrain the information that processes can observe:
for example, a \emph{linearizable} implementation~\cite{HerlihyWing90}
may expose internal state or intermediate results
not determined by the abstract specification.
Leaking this information to the \emph{scheduling adversary}
leads to the surprising behaviors observed in~\cite{GolabHW2011} (which are addressed by imposing the stronger requirement of \emph{strong linearizability}, preventing the adversary from being able to ``change the linearization'' after learning internal information).
Our work goes in a different direction:
our goal is to protect intermediate states from observation by \emph{other processes},
so that they do not learn specific information not revealed by the abstract specification.

Our epistemic framework for information gained by processes
is based on \emph{indistinguishability} of executions from the perspective of a single process, a standard technique in distributed computing~\cite{AttiyaR2020}. 
Recent work on \emph{auditing under curiosity}~\cite{AuditingPODC2025} also employs
indistinguishability-based reasoning to limit information disclosure, but is 
restricted to audit and logging mechanisms. 
Our approach instead uses the abstract object specification as the reference for allowable information exposure for \emph{arbitrary} objects.

Another closely related work studies \emph{concurrent history
independence}~\cite{HIPODC2024,HISTOC2025}, where an external observer may
inspect the entire memory state at quiescent points and should not infer
information beyond what is implied by the current state. Under this notion, even
a single-writer $N$-valued register cannot be implemented from base objects with
fewer than $N$ states~\cite{HIPODC2024}. 
In contrast, we present 
a fully-non-leaking implementation 
of such a register from binary registers.
This separates non-leaking and history-independent implementations: 
the former reason about what an executing operation can infer during execution, 
while the latter considers what an external observer can infer from quiescent memory states.

In \emph{amnesic} register implementations~\cite{ChocklerGuKe07}, 
old values are eventually deleted from memory.
This definition takes a perspective similar to history independence, as it considers the state of the entire memory configuration rather than the execution of a particular operation.

There is extensive work on information hiding in other settings.
\emph{History independence} studies what information about past operations can
be inferred from the memory representation of a data structure~\cite{NaorTeague01},
while \emph{information-flow security} studies how secret inputs may influence
observable outputs~\cite{GoguenMeseguer82}. 
Related notions of observational or
contextual equivalence and refinement appear in programming languages, where
they are used to reason about substitutability~\cite{Milner1999}.
These approaches typically consider external observers, final memory states, or
explicit security policies. In contrast, we study what a process can
infer from its own observations while executing operations, using
the abstract object specification as the reference for permissible information
exposure.

We also note a connection to \emph{differential privacy}~\cite{DworkR2014}.
Our setting differs in that the observer is an operation executing the implementation (rather than an external analyst), and the relevant changes are guided by the abstract object semantics.
This perspective suggests quantitative variants of non-leaking properties, obtained by requiring the distribution of an operation's view to change only slightly under such semantic neighbors.
Non-leaking decision tasks are reminiscent of simulation-based definitions in secure multi-party computation~\cite{Yao1982},
though our setting is non-cryptographic since we assume a weaker adversary.

\section{Definitions}  
\label{sec:definitions}

\subsection{Labeled Transition Systems and Refinement}

We use labeled transition systems~\cite{Keller1976} to model 
both abstract and concrete objects (specifications and implementations, respectively).
A \emph{labeled transition system} (\emph{LTS}) 
$(Q, \Sigma,  s_0, \delta)$ over the possibly-infinite alphabet
$\Sigma$ is a possibly-infinite set $Q$ of states with
initial state $s_0 \in Q$, 
and a transition relation $\delta \subseteq Q \times \Sigma \times Q$.
The $i$th symbol of a sequence $\tau \in \Sigma^*$ is denoted $\tau_i$, 
and $\epsilon$ is the empty sequence.
An LTS is \emph{deterministic} if for any state $s$ 
and any sequence $\tau\in \Sigma^*$, there is at most
one state $s'$ such that $s\xrightarrow{\tau}s'$. 

 A finite \emph{execution} of a LTS is an alternating sequence 
of states and transition labels (also called \emph{actions})
$\rho = s_0, a_0,s_1\ldots a_{k-1},s_k$ for some $k>0$ 
such that $(s_i, a_i, s_{i+1})\in \delta$ for each $0\leq i<k$. 
The projection $\tau| \Gamma$ of a sequence $\tau$ is the maximum 
subsequence of $\tau$ over alphabet $\Gamma \subseteq \Sigma$;
this extends to sets of sequences as usual.
A \emph{trace} is the projection $\rho | \Sigma$ of an execution $\rho$.
$\mathit{E}(A)$ is the set of executions of an LTS $A$,
while $\mathit{T}(A)$ is the set of its traces.

Throughout the paper, we assume a fixed set of processes $P$.
An \emph{object} is a \emph{deterministic} LTS over alphabet 
$\textit{Call} \cup \textit{Ret} \cup \Sigma_o$ where $\textit{Call}$, 
resp., $\textit{Ret}$, is the set of call, resp., return, actions, 
and $\Sigma_o$ is an alphabet of internal actions. 
Formally, a call action $call(m,d,p,k)$, resp., a return action $ret(m,d,p,k)$, 
combines a method $m$ and argument, resp., return value, 
$d$ with a process identifier $p$ and an operation identifier $k$. 
Operation identifiers are used to pair call and return actions.
We assume that the traces of an object satisfy standard well-formedness properties, 
e.g., return actions correspond to previous call actions.

A \emph{configuration} of the system contains the states of
all shared base objects and processes.
In an \emph{initial} configuration,
base objects and processes are in their initial states.

Given a standard description of an object implementation as a set of methods, 
its LTS represents the executions of its most general client 
(that may call methods in any order and from any thread). 
It is convenient to conflate the \emph{states} of the LTS with its \emph{executions} by assuming w.l.o.g.\ that the state space is simply the set of executions.
Because we consider only deterministic implementations, each execution ends in a unique configuration, so each state of the LTS corresponds to a unique configuration of the system.
The transitions of the LTS correspond to statements in the method bodies 
(in which case they are labeled by internal actions in $\Sigma_o$), 
or call and return actions. 
A trace $\tau$ of an object $O$ projected over call and return actions 
is called a \emph{history} of $O$, and it is denoted by $\mathit{hist}(\tau)$.
The set of histories admitted by an object $O$ is denoted by $H(O)$.

Call and return actions $call(m,\_,p,k)$ and $ret(m,\_,p,k)$ are called \emph{matching} 
when they contain the same process and operation identifiers $p,k$. 
A call action is \emph{unmatched} in a history $h$ 
when $h$ does not contain the matching return.
We often refer to a call action $call(m,v_1,p,k)$ and its matching return action
$ret(m,v_2,p,k)$, if it exists, 
as an \emph{operation} with argument $v_1$ and return value $v_2$,
denoted $op(m,p,v_1,v_2)$.
A history $h$ is \emph{sequential} if every call $call(m,\_,p,k)$ 
is immediately   followed by the matching return $ret(m,\_,p,k)$. 

The \emph{view} of a process $p$ in an execution $\gamma$ is the projection of $\gamma$ onto actions taken by process $p$ (i.e., operation calls and returns where the process identifier is $p$, and internal actions taken by process $p$).

An object $C$ \emph{refines} object $A$
if all histories of $C$ are histories of $A$,
i.e., $H(C) \subseteq H(A)$.
In this case we often refer to $C$ as the \emph{concrete object}, and to $A$ as the \emph{abstract object} that $C$ refines.
A \emph{refinement mapping from $C$ to $A$} is a mapping $r : H(C) \rightarrow H(A)$;
clearly, a refinement mapping from $C$ to $A$ exists iff $C$ refines $A$.

In this paper, the concrete objects are implementations of abstract 
objects in the standard asynchronous shared-memory model~\cite{AttiyaW2004} 
with $n$ processes, $P = \set{ p_0,\ldots,p_{n-1} }$.
Processes communicate with each other by applying \emph{primitive} 
operations to shared \emph{base objects}, 
and these are the actions of the concrete object. 
A concrete \emph{implementation} $C$ of and abstract object $A$ 
is an algorithm that provides a local code for each process $p$,
which specifies the primitive operations on base objects $p$ applies
and local computations $p$ performs in order to return a response
when it invokes an operation of $A$.

\emph{Linearizability}~\cite{HerlihyWing90} is essentially refinement
when the abstract object $A$ is sequential. 
Intuitively, $C$ is a \emph{linearizable} implementation of $A$ if each operation
appears to take effect \emph{atomically} at some time between 
its invocation and response, 
hence operations' real-time order is maintained. 
We consider the standard \emph{wait-free} and \emph{lock-free} 
progress conditions.

\subparagraph*{Indistinguishable executions and ``no forgetting''.}
Two executions 
$\gamma$ and $\gamma'$ are \emph{indistinguishable} 
to process $p$, denoted $\gamma \sim_p \gamma'$,
if it has the same view in both executions.
This will be important in the context of our definition of non-leaking implementations:
it means that processes can use past observations 
to try to extract information about other processes.

\subsection{Non-Leaking in Terms of Epistemic Logic}

Let $L = (\Sigma, S, \delta)$ be an LTS representing 
a concurrent object for a set of processes $P$. 
In addition, let $\AP$ be a set of atomic propositions,
and let 
$v : S \times \AP \rightarrow \set{ \true, \false}$
be an evaluation function
indicating for each state $s \in S$ and atomic proposition $p$ whether $p$ holds true in state $s$.
We denote $s \sat p$ if $v(s, p) = \true$.

Syntactically, the formulas of \emph{epistemic logic} are inductively 
defined by the base set $\AP$ and the operators 
$\set{ f_{\land}(\cdot,\cdot), f_{\lor}(\cdot, \cdot), f_{\lnot}(\cdot)} \cup \set{ f_{\K[p][]}(\cdot) : p \in P}$,
where $f_{\circ}(\alpha,\beta) = (\alpha \circ \beta)$ for $\circ \in \set{\land, \lor}$
and $f_{\circ}(\alpha) = \circ (\alpha)$ for $\circ \in \set{ \lnot} \cup \set{\K[p][] : p \in P}$. We omit the parentheses when the order of operations is clear, 
with $\lnot$ and $\K[p][]$ taking precedence over $\land,\lor$.

The semantics of epistemic logic is also defined inductively, 
with the base case (atomic formulas) already defined above,
and the Boolean operators $\land, \lor, \lnot$ evaluated as usual.
The knowledge operator $\K[p][]$
is evaluated as follows:
$s \models \K[p][\varphi]$ iff for every state $s' \in S$
such that $s' \sim_p s$,
we have $s' \models \varphi$. (Note that this includes state $s$ itself. Here, $\sim_p$ is the indistinguishability relation defined above.)

Recall that we conflate the \emph{states} of an LTS representing 
a concurrent object with the \emph{executions}
of the object, so we will frequently write expressions such as $\beta \models \varphi$ to indicate that formula $\varphi$ holds in the state of the LTS that corresponds to execution $\beta$.

Let $L_{abs} = (\Sigma_{abs}, S_{abs}, \delta_{abs}), L_{conc} = (\Sigma_{conc}, S_{conc}, \delta_{conc})$ be two LTSs enriched with the same set of atomic propositions $\AP$.
Fix a set of atomic propositions $A \subseteq \AP$.

If $L_{conc}$ refines $L_{abs}$ (as defined above), 
then we say that it is a \emph{$A$-non-leaking implementation of $L_{abs}$} if for every $a \in A$, process $p$,
and executions $\alpha$ and $\beta$ of $L_{abs}$ and $L_{conc}$ (resp.) such that $r(\beta) = \alpha$
and process $p$ has no pending operations in $\beta$,
    \begin{equation*}
    \text{If}
    \medspace
    \alpha \not \models \K[p][a]
    \medspace
    \text{then}
    \medspace
    \beta \not \models \K[p][a]
    .
    \end{equation*}

\begin{proposition}
    Let $A_1 \subseteq A_2$ be sets of atomic propositions.
    If an implementation is $A_2$-non-leaking, then it is also $A_1$-non-leaking.
\end{proposition}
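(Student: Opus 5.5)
The plan is to unfold the definition of $A$-non-leaking and observe that it is a universally quantified condition over $a \in A$. Nothing else in it depends on $A$, so shrinking $A$ can only weaken the requirement.

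Concretely, suppose the implementation $L_{conc}$ is $A_2$-non-leaking with respect to $L_{abs}$ and the refinement mapping $r$. First, the refinement hypothesis in the definition of $A_1$-non-leaking is identical to the one for $A_2$: $L_{conc}$ refines $L_{abs}$, witnessed by the same mapping $r$. So it remains to check the knowledge condition.

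Fix $a \in A_1$, a process $p$, and executions $\alpha$ of $L_{abs}$ and $\beta$ of $L_{conc}$ with $r(\beta)=\alpha$, such that $p$ has no pending operations in $\beta$. Assume $\alpha \not\models \K[p][a]$. Since $A_1 \subseteq A_2$, we have $a \in A_2$. Applying the $A_2$-non-leaking property to this same $a$, $p$, $\alpha$, and $\beta$ gives $\beta \not\models \K[p][a]$, which is exactly what $A_1$-non-leaking requires.

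There is no real obstacle. The only point to be careful about is that both notions refer to the same pair of LTSs, the same evaluation function on $\AP$, and the same refinement mapping $r$, so the instantiation is literally the same statement. This holds because $A_1, A_2 \subseteq \AP$ are interpreted by the same function $v$ on both sides.
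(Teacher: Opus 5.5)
Your proof is correct and is exactly the intended argument: the definition of $A$-non-leaking is a condition universally quantified over $a \in A$, with the same refinement mapping $r$, so any $a \in A_1 \subseteq A_2$ is already covered by $A_2$-non-leakage. The paper states the proposition without proof for precisely this reason.
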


The framework that we introduced can capture non-leaking of \emph{any} class of facts (i.e., atomic propositions), but in this paper we focus on leakage of individual operations performed by other processes, and/or the arguments and return values of those operations, defined next.

\subsection{Non-Leaking of Operations of the Abstract Object}
Let $\Met$ be the set of methods supported by the object,
and for each $m \in \Met$,
let $\Arg_m,\Ret_m$ be the possible arguments and return values of method $m$ (resp.).
We slightly abuse notation by letting $op(m, p, v_1, v_2)$ be an atomic proposition
which holds true in state $\gamma$ of the LTS
if and only if 
$op(m, p, v_1, v_2)$ appears (and is complete) in the execution $\gamma$ (for $m \in \Met, p \in P, v_1 \in \Arg_m, v_2 \in \Ret_m$).

We define two sets of atomic propositions,
each reflecting a different type of knowledge about operations
that have occurred in the execution:
\begin{itemize}
    \item Knowledge that a specific operation was invoked:
    \begin{equation*}
        \APop = \set{ 
        \medspace
        \exists p \in P \medspace
        \exists a \in \Arg_m \medspace
        \exists r \in \Ret_m
        \medspace.\medspace
        op(m, p, a, r) 
        \quad : \quad
        m \in \Met
        \medspace
        }
        .
    \end{equation*}

    \item Knowledge about the invoking process, arguments and/or return value:
         \begin{align*}
        \AParg &= \set{ 
        \medspace
\exists (p, a, r) \in V 
        \medspace.\medspace
        op(m, p, a, r)        \quad : \quad 
        m \in \Met,
        V \subsetneq P \times \Arg_m \times \Ret_m
        \medspace
             }.
    \end{align*}

\end{itemize}

   An implementation 
    is called \emph{fully-non-leaking}
    if it is $\left( \APop \cup \AParg \right)$-non-leaking,
    and is \emph{argument-non-leaking}
    if it is $\AParg$-non-leaking.

Given a set $V \subseteq P \times \Arg_m \times \Ret_m$,
we use the following short-hand notation. Let
    \begin{equation*}
        op(m, V) = \exists (p, a, r) \in V 
        \medspace.\medspace
        op(m, p, a, r) 
    \end{equation*}
    be an atomic proposition asserting
    that an operation $op(m,p,a,r)$
    with $(p,a,r) \in V$
    has occurred.
    Using this notation we have
    $\APop = \set{ op(m, P \times \Arg_m \times \Ret_m ) : m \in \Met}$,
    and 
    $\AParg = \set{ op(m, V ) : m \in \Met,  V \subsetneq P \times \Arg_m \times \Ret_m }$.
    If an operation has no arguments and/or no return value, we may omit them from our notation.

\subparagraph*{A useful proof technique.}
Recall that the assertion $\gamma \not \models \K[p][a]$ means that there exists a $p$-indistinguishable execution $\gamma'$ such that $\gamma' \not\models a$.
Thus, proving that an implementation is $A$-non-leaking boils down to the following:
given an atomic proposition $a \in A$,
abstract executions $\alpha, \alpha'$ such that $\alpha \sim_p \alpha'$ and $\alpha' \not\models a$,
and a concrete execution $\beta$ that refines $\alpha$,
we must exhibit a concrete execution $\beta'$ such that $\beta \sim_p \beta'$ and $\beta' \not\models a$ (see Fig.~\ref{fig:abstract-concrete}).

\begin{figure}[t]
\centering
\begin{tikzpicture}[
    state/.style={
        draw,
        very thick,
        rounded corners=5pt,
        minimum width=1.2cm,
        minimum height=0.8cm,
        inner sep=3pt
    },
    predtag/.style={
        draw,
        thin,
        rectangle,
        fill=gray!10,
        inner sep=3.5pt,
        minimum width=1.45cm,
        minimum height=0.58cm,
        align=center
    },
    prooftag/.style={
        draw,
        thin,
        dashed,
        rectangle,
        fill=gray!10,
        inner sep=3.5pt,
        minimum width=1.45cm,
        minimum height=0.58cm,
        align=center
    },
    relationbox/.style={
        minimum width=3.0cm,
        minimum height=0.50cm,
        inner sep=0pt
    },
    relationcurve/.style={
        line width=0.8pt,
        line cap=butt
    },
    relationtext/.style={
        font=\scriptsize,
        inner sep=0pt
    },
    arr/.style={
        ->,
        >=stealth,
        thick,
        shorten <=5pt,
        shorten >=5pt
    }
]

\def\abstractY{2.10}

\node[anchor=east] at (-2.15,\abstractY)
    {\textbf{Abstract}};

\node[anchor=east] at (-1.40,0)
    {\textbf{Concrete}};

\node[state] (alpha) at (0,\abstractY)
    {$\alpha$};

\node[state] (alphap) at (5.05,\abstractY)
    {$\alpha'$};

\node[state] (beta) at (0.75,0)
    {$\beta$};

\node[state] (betap) at (5.80,0)
    {$\beta'$};

\node[predtag, anchor=south east] (tagalpha)
    at ([xshift=11pt,yshift=-4.5pt]alpha.north west)
    {$\mathsf{\neg K_p a}$};

\node[predtag, anchor=south west] (tagalphap)
    at ([xshift=-11pt,yshift=-4.5pt]alphap.north east)
    {$\mathsf{\neg a}$};

\node[prooftag, anchor=north east] (tagbeta)
    at ([xshift=11pt,yshift=4.5pt]beta.south west)
    {$\mathsf{\neg K_p a}$};

\node[predtag, anchor=north west] (tagbetap)
    at ([xshift=-11pt,yshift=4.5pt]betap.south east)
    {$\mathsf{\neg a}$};

\node[relationbox] (relalpha)
    at (2.525,\abstractY) {};

\node[relationbox] (relbeta)
    at (3.275,0) {};

\draw[relationcurve]
    ([xshift=-1.50cm]relalpha.center)
    .. controls
        ([xshift=-1.05cm,yshift=0.24cm]relalpha.center)
        and
        ([xshift=-0.45cm,yshift=0.24cm]relalpha.center)
    ..
    (relalpha.center)
    .. controls
        ([xshift=0.45cm,yshift=-0.24cm]relalpha.center)
        and
        ([xshift=1.05cm,yshift=-0.24cm]relalpha.center)
    ..
    ([xshift=1.50cm]relalpha.center);

\node[relationtext, below=3pt of relalpha] (relalphatext)
    {indistinguishable to \(p\)};

\draw[relationcurve]
    ([xshift=-1.50cm]relbeta.center)
    .. controls
        ([xshift=-1.05cm,yshift=0.24cm]relbeta.center)
        and
        ([xshift=-0.45cm,yshift=0.24cm]relbeta.center)
    ..
    (relbeta.center)
    .. controls
        ([xshift=0.45cm,yshift=-0.24cm]relbeta.center)
        and
        ([xshift=1.05cm,yshift=-0.24cm]relbeta.center)
    ..
    ([xshift=1.50cm]relbeta.center);

\node[relationtext, below=3pt of relbeta] (relbetatext)
    {indistinguishable to \(p\)};

\draw[arr]
    (beta.north) --
    node[left=3pt] {$r$}
    (alpha.south);

\node[
    draw,
    dashed,
    thick,
    rounded corners=6pt,
    fit={
        (relbeta)
        (relbetatext)
        (betap)
        (tagbetap)
    },
    inner xsep=6pt,
    inner ysep=8pt
] {};

\end{tikzpicture}
\caption{Proving that an implementation is non-leaking with respect to atomic proposition $a$. Dashed boxes indicate claims that must be established: namely, to prove that $\beta \not \models \K[p][a]$,
we must exhibit a $p$-indistinguishable execution $\beta'$ such that $\beta' \not \models a$.}
\label{fig:abstract-concrete}
\end{figure}

To find $\beta'$, it is helpful to consider a ``simplest possible'' abstract execution 
$\alpha'' \sim_p \alpha'$ such that $\alpha'' \not\models a$,
and try to find a concrete execution $\beta'$ that ``matches it''.
In our case, since the atomic propositions that we consider deal with the presence of operations in the trace, ``simplest'' translates to having as few operations as possible.
Thus, for an abstract execution $\gamma$,
let us define $\min_p(\gamma)$ to be 
a ``minimal $p$-indistinguishable version of $\gamma$'':
$\min_p(\gamma)$ contains a subset of the operations in $\gamma$,
such that $\min_p(\gamma) \sim_p \gamma$,
but we cannot remove any operation from $\min_p(\gamma)$
without violating $p$-indistinguishability.
(There may be more than one such execution, in which case we fix one arbitrarily.)
Working with this minimal execution allows us to prove non-leakage without considering each atomic proposition individually,
by proving the following relationship between abstract and concrete executions:

    \begin{lemma}
    \label{lem:fnl reduction}
    Let $L_{conc}$ be an implementation of $L_{abs}$ with a refinement mapping $r$, and assume that for every process $p$,
    every pair of abstract executions $\alpha, \alpha'$ (of $L_{abs}$) such that $\alpha' \sim_p \alpha$,
    and every concrete execution $\beta$ (of $L_{conc}$) such that $r(\beta) = \alpha$,
     there exists a concrete execution $\beta'$ (of $L_{conc}$)
    such that $\beta' \sim_p \beta$
    and $\beta'$ contains exactly the operations in $\min_p(\alpha')$.
    Then $L_{conc}$ is a fully-non-leaking implementation of $L_{abs}$.
    \end{lemma}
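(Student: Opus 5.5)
We follow the proof technique described just before the statement (Fig.~\ref{fig:abstract-concrete}). Fix a process $p$, an atomic proposition $a \in \APop \cup \AParg$, and executions $\alpha$ of $L_{abs}$ and $\beta$ of $L_{conc}$ with $r(\beta)=\alpha$, $p$ having no pending operations in $\beta$, and $\alpha \not\models \K[p][a]$. We must show $\beta \not\models \K[p][a]$, which means exhibiting a $\beta' \sim_p \beta$ with $\beta' \not\models a$.

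\textbf{Step 1.} By the semantics of $\K[p][]$, there is an abstract execution $\alpha' \sim_p \alpha$ with $\alpha' \not\models a$. Apply the hypothesis of the lemma to $\alpha, \alpha'$ and $\beta$. This yields a concrete $\beta' \sim_p \beta$ whose operations are exactly those of $\min_p(\alpha')$.

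\textbf{Step 2.} It remains to show $\beta' \not\models a$. The key observation is that every atomic proposition in $\APop \cup \AParg$ has the form $op(m,V)$, an existential over completed operations. It is therefore \emph{monotone} in the set of completed operations: if it fails in an execution, it fails in any execution whose completed operations (with their process, argument and return value) form a subset. Now $\min_p(\alpha')$ contains a subset of the operations of $\alpha'$, and $\alpha' \not\models op(m,V)$. Hence no complete operation $op(m,q,x,y)$ with $(q,x,y)\in V$ appears in $\min_p(\alpha')$. Since $\beta'$ contains exactly these operations, none appears in $\beta'$ either, so $\beta' \not\models a$. Therefore $\beta \not\models \K[p][a]$. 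Because $a$, $p$, $\alpha$ and $\beta$ were arbitrary, $L_{conc}$ is $(\APop\cup\AParg)$-non-leaking, that is, fully-non-leaking.

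\textbf{Main obstacle.} The delicate point is interpreting ``contains exactly the operations in $\min_p(\alpha')$'' so that the truth of $op(m,p,v_1,v_2)$ transfers from the abstract execution to the concrete one. The phrase must mean that the same completed operations occur with the same process identifiers, arguments and return values. We would state this explicitly. We would also note that pending operations do not satisfy $op(\cdot)$, since the proposition requires completion, so pending operations cause no issue. With that reading, Step 2 is only the monotonicity argument, and the lemma reduces to unwinding the definitions.
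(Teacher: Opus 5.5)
Your proof is correct and follows the paper's argument step for step. You obtain $\alpha' \sim_p \alpha$ with $\alpha' \not\models a$ from the semantics of $\K[p][]$, observe that $a$ is an existential over completed operations and so also fails on $\min_p(\alpha')$, and transfer this to the $\beta' \sim_p \beta$ supplied by the hypothesis. Your explicit reading of ``contains exactly the operations'' (same processes, arguments and return values) is a reasonable clarification that the paper leaves implicit.
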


    \begin{proof}
    Let $a$ be a proposition in $\APop \cup \AParg$ such that 
    $\alpha \not \models \K[p][a]$. 
    We show that $\beta \not \models \K[p][a]$.

    Since $\alpha \not \models \K[p][a]$, there is an abstract execution $\alpha' \sim_p \alpha$ such that $\alpha' \not \models a$. Since the atomic proposition $a$ concerns the existence of a specific operation, $\gamma \not \models a$ for any (concrete or abstract) execution $\gamma$ containing only a subset of the operations in $\alpha'$. Hence, it also holds that $\min_p(\alpha') \not \models a$.      
    By assumption, 
    there is an execution $\beta' \sim_p \beta$ containing exactly the operations in $\min_p(\alpha')$.
    This implies that $\beta' \not \models a$, and therefore, $\beta \not \models \K[p][a]$.
    \end{proof} 

\subparagraph*{Discussion.}
The definition of $\AParg$ is more subtle than $\APop$,
and is intended to capture \emph{any} knowledge gained about the invoking process and the arguments and return value of the operation, including relationships between them.
For example, it captures the following facts:
\begin{itemize}
    \item Operation $m$ was invoked by process $p$ with a specific argument $a$ and return value $r$:
    we take
    $V = \set{ p } \times \set{ a } \times \set{ r} $.
    \item For an operation $m$ that takes one argument and returns a value from the same domain
    ($\Arg_m = \Ret_m$),
    the fact that 
    operation $m$ was invoked by an unknown process, and its return value was the same as its argument:
    $V = P \times \set{ (v, v) : v \in \Arg_m }$.
    (Note that we do not learn what the argument or return value \emph{are},
    only that they are the same.)
\end{itemize}
In order to be $\AParg$-non-leaking an implementation must not leak \emph{any} such fact.

We also remark that $\APop$ and $\AParg$ are incomparable in the sense that an implementation may be $\APop$-non-leaking but not $\AParg$-non-leaking and vice-versa.
For example, an implementation of a register may not leak that any particular operation has occurred (and so is $\APop$-non-leaking), but in situations where the abstract specification \emph{allows} learning that a specific operation has occurred,
we may learn the arguments of the operation even though the abstract specification does not reveal them (and so the implementation is not $\AParg$-non-leaking).
In the other direction, an implementation may leak that certain operations have occurred (and thus not be $\APop$-non-leaking) while not revealing anything about the arguments or return values (and thus be $\AParg$-non-leaking); this is the case for the stack and queue implementations described in Section~\ref{sec:stack_queue}.

\section{A Wait-Free Fully-Non-Leaking $N$-Valued Single-Writer Register}

We start with a simple and elegant fully-non-leaking implementation 
of a single-writer multi-reader $N$-valued register, 
from binary registers.
Algorithm~\ref{alg:reg} is a recursive implementation of an
$N$-valued register,
from an $\lceil N/2\rceil$-valued register $r$ 
and an array $A$ of $\lceil N/2\rceil$ binary registers.
Recursively implementing $r$ allows using only binary registers. 

Each value $x$ of $r$ represents the two values $2x$ and $2x + 1$; the corresponding entry $A[x]$ indicates the parity of the most recent value written among the two.
We assume that $r$ and $A[0]$ are initialized to 0, representing that the register value is initially 0.
A \textsc{write}$(x)$ first writes to $A[\lfloor x/2\rfloor]$ the parity of $x$, and then announces the new value by writing $\lfloor x/2\rfloor$ to $r$.
A \textsc{read} operation accesses the memory in the reverse order of the writes: it first reads a value $x$ from $r$ and a parity $p$ from $A[x]$, returning the final value $2x + p$.

\begin{algorithm}[bt]
  \caption{Fully-non-leaking wait-free single-writer $N$-valued register}
  \label{alg:reg}
  \begin{algorithmic}[1]
  \small
    \State\textbf{Shared objects}
    \State\hspace{\algorithmicindent}$r$: $\lceil N/2\rceil$-valued register, initialized to $0$
    \State\hspace{\algorithmicindent}$A$: array $[0..\lceil N/2\rceil-1]$ of binary registers (each initially 0)
    \Statex
    \begin{minipage}[t]{0.45\textwidth}
    \Function{write}{$x$}
      \State $A[\lfloor x/2\rfloor].\mathsf{write}(x \bmod 2)$
      \State $r.\mathsf{write}(\lfloor x/2\rfloor)$
    \EndFunction
    \end{minipage}
    \begin{minipage}[t]{0.45\textwidth}
    \Function{read}{$~$}
      \State $x \gets r.\mathsf{read}()$
      \State \Return{$2x + A[x].\mathsf{read}()$}
      \EndFunction
      \end{minipage}
  \end{algorithmic}
\end{algorithm}

Unfolding the recursive construction leads to a binary tree of binary registers with height $\lceil \log_2 N \rceil$.
The writer updates the value bit-by-bit, starting from the least significant bit in the leaves and concluding with the most significant bit at the root.

We linearize the operations as follows.
We begin with a $\textsc{write}(x)$ operation $wop$. The linearization point for $wop$ is defined as the earliest step in the execution between $wop$ performing $r.\mathsf{write}(\lfloor x/2 \rfloor)$ and any step in which a $\textsc{read}$ operation reads the value $wop$ writes to $A[\lfloor x/2 \rfloor]$.
To linearize \textsc{read} operations we prove 
(Lemma~\ref{lem:reg read from})
that if a \textsc{read} operation reads the value $b$ from $A[x]$, 
then either $b=0$ and $x = 0$, or this value was written 
by a \textsc{write}$(2x + b)$ operation.
A $\textsc{read}$ operation $rop$ is linearized after 
the $\textsc{write}$ operation that writes the value to $A$ which $rop$ reads.
If $rop$ reads the initial value from $A$, 
it is linearized at the beginning before any $\textsc{write}$ operation.

\begin{lemma}
\label{lem:reg read from}
    If a \textsc{read} operation reads the value $b$ from $A[x]$, then either $b=0$ and $x = 0$, or this value was written by a \textsc{write}$(2x + b)$ operation.
\end{lemma}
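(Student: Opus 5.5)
Since every entry of $A$ is a binary register and the only operations that write to $A$ are \textsc{write} operations executing line~1, a read of $A[x]$ returns either the initial value of $A[x]$ (if no write to $A[x]$ precedes it) or the value of the most recent write to $A[x]$ preceding it. The plan is to split on these two cases, using only the code of \textsc{write} and the facts that $r$ is an atomic register (or, in the recursive unfolding, a linearizable one by induction on $N$) and that there is a single writer.

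\emph{Case 1: some write to $A[x]$ precedes the read.} Let the last such write be performed by an operation $wop = \textsc{write}(y)$. By line~1, $wop$ writes to $A[\lfloor y/2\rfloor]$ the value $y \bmod 2$. Since it wrote to $A[x]$, we have $\lfloor y/2\rfloor = x$, and since the read returns $b$, we have $y \bmod 2 = b$. Hence $y = 2\lfloor y/2\rfloor + (y \bmod 2) = 2x+b$, so the value $b$ was written by a $\textsc{write}(2x+b)$ operation, as required.

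\emph{Case 2: no write to $A[x]$ precedes the read.} Then the read returns the initial value $0$, so $b=0$, and it remains to show $x=0$. The read obtained $x$ from $r$ before reading $A[x]$. If $x$ is the initial value of $r$, then $x=0$ and we are done. Otherwise, $x$ was written to $r$ by some operation $\textsc{write}(y)$ at line~2 with $\lfloor y/2\rfloor = x$. That operation had already executed line~1, writing to $A[x]$, and because the single writer executes its operations sequentially, this write to $A[x]$ precedes the write to $r$. The write to $r$ in turn precedes the read of $r$, which precedes the read of $A[x]$. So a write to $A[x]$ precedes the read of $A[x]$, contradicting the case assumption.

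The only subtle step is Case~2, specifically the claim that a value read from $r$ was written by a write whose preceding line~1 step occurs before the read of $r$. When $r$ is itself implemented recursively, I would justify this by induction on $N$: apply the linearizability of the inner $\lceil N/2\rceil$-valued register (which holds by the induction hypothesis), so that the read of $r$ returns either the initial value $0$ or the value of a write to $r$ linearized before the read completes. Since that \textsc{write}'s invocation of $r.\mathsf{write}$ starts after its line~1 step, the required real-time ordering follows.
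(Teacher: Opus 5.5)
Your proof is correct and rests on the same ordering argument as the paper's: a \textsc{write} updates $A[\lfloor y/2\rfloor]$ before $r$, while a \textsc{read} accesses $r$ before $A[x]$. Hence once $x$ is visible in $r$, the entry $A[x]$ has already been written by some \textsc{write}$(2x+b')$.

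You split on whether $A[x]$ was written before the read, whereas the paper splits on whether the value read from $r$ is initial. Your split is slightly cleaner: it explicitly covers a \textsc{read} that sees the initial $0$ in $r$ but finds $A[0]$ already overwritten by a \textsc{write}$(1)$, a case the paper's phrasing glosses over. Your induction on $N$ also makes explicit that the argument relies on the linearizability of the recursively implemented $r$, which the paper leaves implicit.
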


\begin{proof}
    Before a $\textsc{read}$ operation $rop$ reads $A[x]$, it first reads the value $x$ from $r$. Conversely, before a $\textsc{write}(x)$ operation writes the value $\lfloor x/2 \rfloor$ to $r$, it first writes to $A[\lfloor x/2 \rfloor]$.
    
    Consequently, $rop$ either reads the initial values from $r$ and $A[0]$, or it reads a value $x$ that was written by some $\textsc{write}(2x)$ or $\textsc{write}(2x+1)$ operation. As explained above, the ordering of these steps ensures that by the time $x$ is visible in $r$, the value $b$ read from $A[x]$ is no longer the initial value, but rather the value stored there by a $\textsc{write}(2x+b)$ operation.
\end{proof}

\begin{lemma}
\label{lem:reg linearize}
Algorithm~\ref{alg:reg} implements a linearizable single-writer $N$-valued register.
\end{lemma}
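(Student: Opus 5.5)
The plan is to prove Lemma~\ref{lem:reg linearize} by induction on $N$, following the recursive structure of Algorithm~\ref{alg:reg}.

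\textbf{Base case and inductive hypothesis.} For $N\le 2$ the register $r$ has a single value, so $r$ is trivial and the object is just the binary register $A[0]$, which is atomic. For larger $N$, the inductive hypothesis lets me treat the $\lceil N/2\rceil$-valued register $r$ as atomic. This is safe because it is itself a linearizable single-writer register, and linearizability is compositional. Hence every step of the algorithm is an atomic access to a base register, and I can argue directly about the linearization points defined above.

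\textbf{Writes.} For a $\textsc{write}(x)$ operation $wop$, the linearization point is the earlier of two steps: $wop$'s $r.\mathsf{write}$, or the first step at which some read obtains $wop$'s value from $A[\lfloor x/2\rfloor]$. Both steps occur after $wop$'s own write to $A$, so the point is after the invocation. The $r.\mathsf{write}$ step precedes the response, so the point is before the response. There is a single writer, so the writes are sequential and their linearization points appear in program order.

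\textbf{Reads.} By Lemma~\ref{lem:reg read from}, each read $rop$ obtains its $A$-value from a unique write $w$, or from the initial value. I will place $rop$ immediately after $w$'s linearization point if that point lies inside $rop$'s interval. Otherwise I place $rop$ at its own $A$-read step. In either case $w$ is linearized no later than that step, by the definition of $w$'s point. The sequential specification then holds provided no later write $w'$ is linearized before $rop$'s point. This is the main obstacle.

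To rule this out, I will split on why $w'$ was linearized early.

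\begin{itemize}
\item \emph{Case (a): $w'$'s $r.\mathsf{write}$ has already occurred.} Then $rop$'s read of $r$, which is atomic, returns the index of $w'$ or of a later write. Single-writer order then forces $rop$'s read of $A$ to return a value from $w'$ or a later write, not from $w$. This contradicts the choice of $w$.
\item \emph{Case (b): some other read $rop_1$ had already returned $w'$'s value before $w'$ wrote $r$.} This case is more delicate. It requires $rop_1$ to have read $r$ with $w'$'s index, set by an earlier write $w_0$ with the same index, before an intermediate write $w$ with a different index overwrote $r$.
\end{itemize}

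I would first try to show that Case (b) still forces $rop$'s $A$-read to see $w'$ or later. However, sketching a 4-valued instance suggests this may fail. Take the writes $\textsc{write}(2)$, $\textsc{write}(0)$, $\textsc{write}(3)$. Let $rop_1$ read $r=1$ early and then read $A[1]=1$, returning $3$. Let a subsequent $rop$ read $r=0$, returning $0$. This looks like a new-old inversion. So the critical step is to examine carefully whether this interleaving is really possible. If it is, the linearization argument (or the algorithm) needs repair, for example by having the read reread $r$.
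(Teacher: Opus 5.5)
Your proposal stops short of a proof, but the obstacle you isolated in Case (b) is not a gap in your reasoning. The interleaving is realizable, and it refutes the statement. Take $N=4$, so $r$ is a binary (hence atomic) base register, and $A$ consists of $A[0],A[1]$; all registers are initially $0$.
(1) The writer runs \textsc{write}$(2)$ to completion: $A[1]\gets 0$, then $r\gets 1$.
(2) A read $rop_1$ reads $r=1$.
(3) The writer runs \textsc{write}$(0)$ to completion: $A[0]\gets 0$, then $r\gets 0$.
(4) The writer starts \textsc{write}$(3)$ and performs $A[1]\gets 1$.
(5) $rop_1$ reads $A[1]=1$ and returns $3$.
(6) A new read $rop_2$ reads $r=0$, then $A[0]=0$, and returns $0$.
(7) The writer performs $r\gets 1$.

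The only write of $3$ is \textsc{write}$(3)$, so it must be linearized before $rop_1$. Since $rop_2$ is invoked after $rop_1$ returns, it must be linearized after $rop_1$, hence after \textsc{write}$(3)$, which is the last write. So $rop_2$ would have to return $3$, and the execution has no linearization. Both reads can be issued by the same process, so this is not a multi-reader artifact. It is the classic new--old inversion: the algorithm is regular (every read returns the value of the last write completed before its invocation, or of a concurrent write) but not atomic.

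The paper's proof fails at exactly this point, in the case of two reads with $rop_1$ preceding $rop_2$. It observes that the write $wop$ after which $rop_1$ is linearized has written $A$ before $rop_1$ returns. From this it concludes that $rop_2$ reads from $r$ a value written by $wop$ or by a later write. That conclusion needs $wop$'s $r.\mathsf{write}$ to have occurred. However, $wop$'s linearization point may be the earlier step at which some read first obtains its $A$-value. In that case $r$ can still hold the index of the preceding write, \textsc{write}$(0)$ above. With the paper's placement, $rop_2$ is then ordered right after \textsc{write}$(0)$, hence before \textsc{write}$(3)$ and before $rop_1$, which violates real-time order. The case of a write preceding a read is fine, because a completed write has updated $r$.

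Consequently, no repair of the argument, yours or the paper's, can prove the lemma as stated; the algorithm itself would have to change. A smaller point: your Case (a) is also not justified as written. When $rop$ is placed at its $A$-read step, the fact that $w'$'s $r.\mathsf{write}$ precedes that step does not mean it precedes $rop$'s read of $r$. Placing such reads at their $r$-read step instead repairs Case (a), but this is moot given the counterexample.
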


\begin{proof}
    Lemma~\ref{lem:reg read from} shows that the linearization satisfies the sequential specification of a register.
    We show that the linearization preserves the real-time order of non-overlapping operations.
    The order between \textsc{write} operations is preserved because they are linearized at a step between the invocation and response of the operation.

    Consider a $\textsc{read}$ operation $rop$. $rop$ cannot precede the $\textsc{write}$ operation $wop$ that it is linearized after, as $rop$ reads a value written by $wop$. Because all other $\textsc{write}$ operations placed before $rop$ in the linearization precede $wop$, it follows that $rop$ cannot precede them.

    Consider a \textsc{write} operation $wop$ that precedes $rop$. Then, $rop$ reads from $r$ a value written by $wop'$ where $wop' = wop$ or $wop'$ follows $wop$ (recall all \textsc{write} operations are non-overlapping).
    Since $wop'$ first writes to $A$ before writing to $r$, it also follows that $rop$ reads from $A$ a value written by $wop''$ where $wop'' = wop'$ or $wop''$ follows $wop'$.
    $rop$ is linearized after $wop''$, and is therefore also linearized after $wop$.

    Consider two $\textsc{read}$ operations, $rop_1$ and $rop_2$, such that $rop_1$ precedes $rop_2$ in the execution.
    Let $wop$ be the \textsc{write} operation $rop_1$ is linearized after.

    Note that $wop$ writes to $A$ before $rop_1$ returns.
    Since \textsc{write} operations do not overlap, $rop_2$ reads from $r$ a value written by $wop'$ where $wop' = wop$ or $wop'$ follows $wop$. 
    Since $wop'$ first writes to $A$ before writing to $r$, it also follows that $rop_2$ reads from $A$ a value written by $wop''$ where $wop'' = wop'$ or $wop''$ follows $wop'$.
    $rop_2$ is linearized after $wop''$, and is therefore also linearized after $wop$.
    This implies $rop_2$ is also linearized after $rop_1$. 
\end{proof}

Next we prove that it is fully-non-leaking.
A \textsc{write} operation only writes to memory, so the 
writer does not learn anything about other operations.
The non-leakage property for \textsc{read} operations follows by observing that a $\textsc{read}$ operation returns $x$ if and only if it reads $\lfloor x/2 \rfloor$ from $r$ and $x \bmod 2$ from $A[\lfloor x/2 \rfloor]$.
These are the only two memory reads a $\textsc{read}$ operation performs, implying:

\begin{lemma}
\label{lem:non-leaking-reg}
Algorithm~\ref{alg:reg} is a fully-non-leaking implementation of a single-writer register.
\end{lemma}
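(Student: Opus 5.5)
We would apply Lemma~\ref{lem:fnl reduction}. Fix a process $p$, abstract executions $\alpha' \sim_p \alpha$, and a concrete execution $\beta$ with $r(\beta)=\alpha$, where $p$ has no pending operation in $\beta$. We will build a concrete execution $\beta' \sim_p \beta$ whose operations are exactly those of $\min_p(\alpha')$. First we determine what $\min_p(\alpha')$ looks like. In the abstract register, $p$'s view consists only of its own calls and returns. For the writer, the view is its sequence of \textsc{write}s. For a reader, the view is its sequence of \textsc{read}s together with their return values.

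\textbf{Case 1: $p$ is the writer.} Its operations contain only write steps, whose outcomes do not depend on other processes. So $\min_p(\alpha')$ is just $p$'s own writes. We take $\beta'$ to be $\beta$ restricted to $p$'s steps, which is a valid execution with the same view.

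\textbf{Case 2: $p$ is a reader.} Let $p$'s reads in $\alpha'$ return $x_1,\dots,x_k$, in this order. A minimal $p$-indistinguishable abstract execution keeps only $p$'s reads, plus writes that make each return value legal. Concretely, it contains one \textsc{write}$(x_i)$ immediately before the $i$th read whenever $x_i$ differs from the current value; no write is needed when the value is $0$ initially, or when $x_i=x_{i-1}$. Any other operation can be removed without changing $p$'s view, and these writes cannot be removed. We fix $\min_p(\alpha')$ to be this execution. Note that the $x_i$ are also the return values of $p$'s reads in $\alpha$, hence in $\beta$, because refinement preserves return values.

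We then construct $\beta'$ sequentially. The writer runs \textsc{write}$(x_i)$ to completion solo, when needed, and then $p$ runs its $i$th \textsc{read} solo. The read of $x_i$ reads $\lfloor x_i/2\rfloor$ from $r$ (recursively, in the tree of binary registers) and $x_i \bmod 2$ from $A[\lfloor x_i/2\rfloor]$. These are exactly the values $p$ read in $\beta$, using the observation preceding the lemma: a \textsc{read} returns $x$ iff it reads $\lfloor x/2\rfloor$ from $r$ and $x\bmod 2$ from $A[\lfloor x/2\rfloor]$. Since the algorithm is deterministic and $p$'s local state depends only on the values it reads, $p$'s view in $\beta'$ equals its view in $\beta$. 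The same holds recursively inside $r$: the read of $r$ returning $\lfloor x_i/2\rfloor$ determines the bits read at every level of the tree. We argue this by induction on $N$, using the inductive hypothesis that the read steps of the $\lceil N/2\rceil$-valued register are determined by its return value.

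The main obstacle is this last point: showing that $p$'s internal view, meaning the exact sequence of base-register reads and their responses, is a function of the return value alone. This needs the induction over the recursive structure, and the check that a solo sequential execution reproduces it. A secondary subtlety is the same-value case $x_i=x_{i-1}$. Here we must ensure that omitting the write in $\beta'$ does not change what $p$ reads. It does not, because the memory then still encodes $x_{i-1}=x_i$ along its path.
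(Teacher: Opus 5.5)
Your proposal is correct and follows essentially the same route as the paper: you build $\beta'$ sequentially from $p$'s operations, inserting a complete solo \textsc{write}$(x)$ by the writer only when $p$'s next \textsc{read} returns a value different from the current one in $\beta'$, and you use the fact that a \textsc{read}'s view is determined by its return value (your induction over the recursive tree makes explicit what the paper asserts only for the top level of the recursion). One small remark: if the writer may also read, your Case~1 still goes through unchanged, since readers never write base registers, so $\beta$ restricted to the writer's steps reproduces all of its reads.
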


\begin{proof}
    Let $\alpha,\beta$ be executions of respectively the abstract register and Algorithm~\ref{alg:reg} such that $\alpha$ is the linearization of $\beta$, as defined in the proof of Lemma~\ref{lem:reg linearize}.
    Let $\alpha' \sim_p \alpha$. 
    We show that there exists a concrete execution $\beta' \sim_p \beta$ that contains exactly the operations in $\min_p(\alpha')$.
    By Lemma~\ref{lem:fnl reduction}, this implies that the algorithm is fully-non-leaking.

  We construct $\beta'$ by iterating over the operations of $p$ in $\min_p(\alpha')$.
  Since $\min_p(\alpha') \sim_p \alpha$ and $\alpha$ is the linearization of $\beta$, both $\min_p(\alpha')$ and $\beta$ contain the exact same operations by process $p$, appearing in the exact same order. 
    If the next operation of $p$ is a \textsc{write} operation, we simply append it to $\beta'$. We can do this because $p$ only writes during the operation and never reads from memory.
    Thus, $p$ gains no knowledge about the other operations in $\beta$.

    If the next operation of $p$ is a \textsc{read} operation that returns $x$, then consider the last \textsc{write} operation (if it exists) in $\beta'$. 
    If the argument for this \textsc{write} is $x$, we simply append the \textsc{read} operation by $p$ to $\beta'$.
    Otherwise, since $\min_p(\alpha')$ respects the sequential specification, the preceding \textsc{write} operation in $\min_p(\alpha')$ has argument $x$ and is by process $q\neq p$.
    We first append a complete \textsc{write}$(x)$ by process $q$, and then append the \textsc{read} operation by $p$ to $\beta'$.

    By construction, $\beta' \sim_p \beta$. In addition, due to the minimality of $\min_p(\alpha')$, any operation not executed by $p$ exists solely to justify a value returned by a \textsc{read} operation by $p$. Hence, $\min_p(\alpha')$ does not contain any operation that is not included in $\beta'$.
\end{proof}

\section{A Wait-Free Fully-Non-Leaking Bounded Max Register}

A max register~\cite{AspnesAttiyaCensor2012} stores values from 
a totally ordered domain $V$ and supports two operations, 
\textsc{writeMax}$(x)$ and \textsc{readMax}(), 
where \textsc{readMax} returns the largest value
previously written by a \textsc{writeMax} operation.
In this section, we present a non-leaking implementation of a max register with finite domain $V = \{0,\ldots,N-1\}$.

Algorithm~\ref{alg:reg} is reminiscent of the bounded max register 
of~\cite{AspnesAttiyaCensor2012}, where the value is also written bit-by-bit starting from the least significant bit using the same tree structure.
Note, however, that their max register implementation is leaking, 
as writers condition their writes on values read at each level.
(For a reader familiar with this algorithm: 
if the current bit of the value being written is 0, but the bit at this level is 1, the writer terminates its execution, but learns that a larger value has been written. 
Omitting this check is difficult since it prevents writers of smaller values from overwriting the bits of larger values.)

Algorithm~\ref{alg:maxregister} is a fully-non-leaking bounded max register.
It uses two arrays of binary registers, $S[0..N-1]$ and $I[0..N-1]$. The value $1$ in $S[x]$ marks that $x$ is \textit{set}, while a $1$ in $I[x]$ marks that $x$ is \emph{invalidated}. A read operation returns $x$ only if it finds that $I[x]=0$ and $S[x]=1$. $S[0]$ is initially 1, 
representing that the max register is initially $0$.

A \textsc{writeMax}$(x)$ first writes $1$ to $S[x]$
then invalidates all markers $I[y]$ with $y<x$ by writing 1 to $I[y]$. 
This both records that some process wrote at $x$ and eliminates evidence of
smaller values. 
A \textsc{readMax} operation goes over the arrays of markers $I$, from $N-1$ downwards. 
If it finds an index $x$ where $I[x] = 0$ and $S[x] = 1$, 
it invalidates all $I[y]$ for $y<x$ and returns $x$. 
However, if the scan encounters an invalid marker ($I[x] = 1$), it aborts the downward search and initiates an upward scan starting from $x+1$. Upon finding the first index $y$ in this upward scan such that $I[y] = 0$, the operation similarly invalidates all preceding markers and returns $y$.

\begin{algorithm}[bt]
  \caption{Fully-non-leaking wait-free max register with input domain $\{0,\ldots,N-1\}$}
  \label{alg:maxregister}
  \begin{algorithmic}[1]
  \small
    \State\textbf{Shared objects}
    \State\hspace{\algorithmicindent}$S[0..N-1]$: array of $N$ registers, initially 0 except $S[0] = 1$
    \State\hspace{\algorithmicindent}$I[0..N-1]$: array of $N$ registers, initially 0
    
    \Function{writeMax}{$x$}
    \State $S[x].\mathsf{write}(1)$
    \For{$y = x-1, x-2,\ldots,0$} $I[y].\mathsf{write}(1)$
    \EndFor
    \EndFunction
    \Function{readMax}{$~$}
    \For{$x= N-1,N-2,\ldots , 0$}
    \Comment{Downward scan}
    \If{$I[x].\mathsf{read}() = 1$}
        \For{$z= x+1,\ldots ,N-1$}
        \Comment{Upward scan}
        \If{$I[z].\mathsf{read}() = 0$}
        \For{$y= z-1,z-2,\ldots, 0$} $I[y].\mathsf{write}(1)$
        \EndFor
        \State \Return{$z$}
    \EndIf
    \EndFor
    \EndIf
    \If{$S[x].\mathsf{read}() = 1$}
    \For{$y= x-1,x-2,\ldots, 0$} $I[y].\mathsf{write}(1)$
    \EndFor
    \State \Return{$x$}
    \EndIf
    \EndFor
    \EndFunction 
  \end{algorithmic}
\end{algorithm}

Termination of \textsc{writeMax} is immediate. 
If \textsc{readMax} finds a set value during its downward scan, it returns a valid value.
If it finds that $I[x] = 1$ during its downward scan, the subsequent upward scan (starting from $I[x+1]$) is guaranteed to find $0$ at some $I[y]$, $y > x$,
because the entry in $I$ corresponding to the largest value ever written cannot be invalidated. Since no larger value is ever written, neither this entry nor any entries corresponding to larger values can be invalidated.
This is formalized in the next proposition.

\begin{proposition}
\label{prop:max val set}
    Let $x$ be the largest value such that $S[x].\mathsf{write}(1)$ was performed during the execution. Then for all $y \geq x$, $I[y] = 0$.
\end{proposition}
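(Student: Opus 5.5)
The plan is to prove the statement as an invariant over every prefix of the execution. We take $x$ to be the largest value for which some step $S[x].\mathsf{write}(1)$ has occurred in the prefix, counting $S[0]$ as set initially so that $x$ is always defined. By inspecting Algorithm~\ref{alg:maxregister}, the only steps that modify $I$ are writes of $1$ to some $I[y]$. These are issued in exactly three places: the invalidation loop of a \textsc{writeMax}$(x')$, which writes $I[y]$ for $y<x'$ after $S[x']$ has been set; the invalidation loop after a downward-scan return of some $x''$, which writes $I[y]$ for $y<x''$ after reading $S[x'']=1$; and the invalidation loop after an upward-scan return of some $z$, which writes $I[y]$ for $y<z$. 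It therefore suffices to show that every such write targets an index strictly smaller than the current maximal set value $x$. Since $x$ can only grow over time, once $I[y]=0$ holds for all $y\ge x$, every later write to $I[y]$ with $y<x_{\text{old}}\le x_{\text{new}}$ also preserves the invariant.

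The first two cases are direct. In a \textsc{writeMax}$(x')$, $S[x']$ was written before the loop, so $x'\le x$ at the time of each write, and every written index satisfies $y<x'\le x$. In a downward-scan return of $x''$, the operation read $S[x'']=1$. Because $S[0]$ is initially $1$ and every other $S[\cdot]$ becomes $1$ only through a \textsc{writeMax}, this gives $x''\le x$, and again $y<x$.

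The upward-scan case is the crux, and I will handle it by strong induction on the order of steps. Suppose a \textsc{readMax} returns $z$ because it read $I[z]=0$ during the upward scan. That scan began at index $x_0+1\le z$ after the operation read $I[x_0]=1$. By the induction hypothesis applied at the moment $I[x_0]$ was set, the write to $I[x_0]$ was performed with $x_0<x_t$, where $x_t$ is the maximal set value at that earlier time. Since the maximum never decreases, $x_0<x$ at the current time. What remains is to rule out $z>x$.

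To that end, I will strengthen the invariant so that it also describes the scan. While the upward scan moves from $x_0+1$ toward $z$, every index $w$ it passes over with $I[w]=1$ already satisfied $w<x$ when that $1$ was read, again by the induction hypothesis together with monotonicity of the maximum. Index $x$ itself always has $I[x]=0$ by the induction hypothesis. The scan stops at the first index where it reads $0$, and it reaches index $x$ at the latest. Hence $z\le x$, which means its invalidation writes target $y<z\le x$. Formalizing this bookkeeping is the step I expect to be the main obstacle: the induction must be set up over the global sequence of steps so that the hypothesis applies to every earlier write to $I$, including those performed by concurrent \textsc{readMax} operations. Finally, the case $y>x$ is immediate, because such an entry could only be invalidated by an operation whose value exceeds $x$, and the argument above shows that no such operation writes to $I$.
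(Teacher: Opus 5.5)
Your proof is correct and uses the same idea the paper gives informally: every invalidation targets an index strictly below some value that is already set, so the largest set value and everything above it are never invalidated. Your induction over the global order of steps makes this rigorous, which matters for the upward-scan case, where the paper tacitly assumes a \textsc{readMax} only invalidates below an already-set value. The strengthening you anticipate is not needed, since $x_0 < x$ together with $I[x]=0$ at every earlier step already forces the upward scan to stop at some $z \le x$.
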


A \textsc{writeMax}$(x)$ operation $wop$ is \emph{visible} if $wop$ changes $S[x]$ from $0$ to $1$.
We first order all visible \textsc{writeMax} operations by their arguments, from smallest to largest.
Note that since each $S[x]$ can be set to 1 only once, there is at most one visible \textsc{writeMax}($x$) for each $x$.

Next, we place the remaining \textsc{writeMax} operations and the \textsc{readMax} operations.
We go through these operations by the order they are invoked 
and place each one after a visible \textsc{writeMax} operation.
Each operation is placed immediately before the next visible \textsc{writeMax} operation in the order, or at the end if no such visible operation exists.

To place the \textsc{readMax} operations, we need to show that they return a value written by some \textsc{writeMax}. Since invalidations occur in descending order, and $S[x]$ is set before invalidating values smaller than $x$, we have the following proposition.

\begin{proposition}
\label{prop:write exists}
    If $I[x] = 1$ for some $x$, then at the same point in the execution, either $I[x+1] = 1$ or $S[x+1] = 1$.
\end{proposition}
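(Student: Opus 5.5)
The plan is to prove the statement as an invariant over the execution. We show it holds in every reachable configuration by induction on the number of steps.

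Initially all $I$ entries are $0$, so the statement holds vacuously. The only steps that can falsify the invariant "$I[x]=1 \Rightarrow (I[x+1]=1 \lor S[x+1]=1)$" are of two kinds: a write of $1$ to some $I[x]$, which creates a new premise, or a change to $I[x+1]$ or $S[x+1]$. Registers $S$ and $I$ are only ever written with $1$, so entries never revert to $0$. Hence only the first kind matters, and we must show that whenever some process writes $1$ to $I[x]$, then at that moment $I[x+1]=1$ or $S[x+1]=1$ already holds.

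We inspect the three places in Algorithm~\ref{alg:maxregister} where $I[y].\mathsf{write}(1)$ occurs. Each is a descending loop $y=w-1,w-2,\ldots,0$ for some index $w$:
\begin{itemize}
\item In \textsc{writeMax}$(w)$, the loop is preceded by $S[w].\mathsf{write}(1)$.
\item In the downward-scan return of \textsc{readMax}, the loop is preceded by a read returning $S[w]=1$.
\item In the upward-scan return at index $z=w$, the loop is preceded by a read returning $I[w]=0$, which itself followed a read of $I[x']=1$ for some $x'<w$.
\end{itemize}

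Consider the write to $I[y]$. If $y<w-1$, the same process has already written $1$ to $I[y+1]$ earlier in the same loop, and by monotonicity $I[y+1]=1$ still holds. If $y=w-1$, we need $I[w]=1$ or $S[w]=1$ at that moment. In the first two cases this follows because $S[w]=1$ was written or observed earlier, and it persists by monotonicity.

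The main obstacle is the upward-scan case. There the process observed $I[w]=0$, so we cannot appeal to $S[w]$ directly. We argue from the induction hypothesis instead. The process read $I[w-1]=1$ at some earlier time $t$: either $w-1=x'$, where the downward scan read $I[x']=1$, or $w-1>x'$, where the upward scan read a value different from $0$ at $I[w-1]$. Applying the invariant at time $t$ gives $I[w]=1$ or $S[w]=1$ at $t$. By the later read, $I[w]=0$ at a time after $t$, and by monotonicity $I[w]=0$ at $t$ as well. Therefore $S[w]=1$ at $t$, and it remains $1$ when the process writes $I[w-1]$.

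Since the invariant holds in every configuration, it holds in particular "at the same point in the execution" as stated.
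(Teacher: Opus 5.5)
Your proposal is correct and follows the paper's reasoning, which is a one-line justification: invalidations proceed in descending order, and $S[w]$ is set before $I[w-1]$ is invalidated. You make this rigorous as a step-by-step inductive invariant, and your upward-scan case, which derives $S[w]=1$ from the invariant at the earlier read $I[w-1]=1$ together with the later read $I[w]=0$, supplies a step the paper leaves implicit, since there the invalidating process never reads $S[w]$ itself.
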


Consider a \textsc{readMax} operation $rop$ that returns $x$. To return $x$, $rop$ must either directly read $S[x] = 1$, or read $I[x] = 0$ during an upward scan. By Proposition~\ref{prop:write exists}, the latter case also implies that $S[x]$ contains 1.
This implies that there is a visible \textsc{writeMax}$(x)$ operation that sets this entry, and $rop$ is placed after it.

Consider an invisible \textsc{writeMax}$(x)$ operation $wop$. 
It performs $S[x].\mathsf{write}(1)$ only after $S[x]$ has been previously set.
Consider all visible \textsc{writeMax} operations that successfully set their entries in $S$  prior to the $\mathsf{write}$ invoked by $wop$. We place $wop$ after the visible \textsc{writeMax} operation with the maximal argument among them.
Note that $wop$ is placed after a visible \textsc{writeMax}$(y)$ operation such that $y \geq x$. 

This construction respects the sequential specification of a max-register. Because visible \textsc{writeMax} are ordered by their arguments and no invisible \textsc{writeMax} is placed before a visible \textsc{writeMax} with a smaller or equal argument, every \textsc{readMax} returns the maximum value among all \textsc{writeMax} operations linearized before it.

The next lemma is the key to proving that the ordering preserves the real-time order between non-overlapping operations.
It follows from the fact that \textsc{readMax} operations returning $x$, and \textsc{writeMax}($x$) operations invalidate all $I[y]$ for $y < x$.

\begin{lemma}
\label{lem:read larger}
    Let $op$ be a \textsc{readMax} operation that returns $x$ or a \textsc{writeMax}$(x)$ operation. 
    For any operation $op'$ that follows $op$, where $op'$ is a \textsc{readMax} operation returning $y$ or a visible \textsc{writeMax}$(y)$ operation, it holds that $x \leq y$.
\end{lemma}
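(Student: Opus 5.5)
The plan is to fix the state of shared memory at the moment $op$ returns, and then show that any operation $op'$ invoked afterwards starts from that state and therefore cannot return, or set, a smaller value. The argument rests on the fact that no process ever writes $0$ to $S$ or to $I$, so every $1$ stays permanently. Throughout, $op$ precedes $op'$ in real time, meaning $op$ returns before $op'$ is invoked.

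\textbf{Step 1: the state when $op$ returns.} I will show that at this moment $S[x]=1$ and $I[z]=1$ for every $z<x$.
\begin{itemize}
\item If $op$ is a \textsc{writeMax}$(x)$, then it wrote $S[x]$ and then every $I[z]$ with $z<x$.
\item If $op$ is a \textsc{readMax} returning $x$, there are two sub-cases. If it read $S[x]=1$ directly, then $S[x]=1$. If it returned from the upward scan, then it read $I[x]=0$ after reading $I[x-1]=1$. Proposition~\ref{prop:write exists}, applied at the moment $I[x-1]$ was read, gives $I[x]=1$ or $S[x]=1$ at that moment; since $I$-entries are never reset and $I[x]$ is read as $0$ afterwards, it must be that $S[x]=1$. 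In both sub-cases $op$ then invalidates every $I[z]$ with $z<x$ before returning.
\end{itemize}
By monotonicity, both facts continue to hold for the rest of the execution.

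\textbf{Step 2: $op'$ is a \textsc{readMax} returning $y$.} Its downward scan starts at $N-1$ and reaches $x$ before any index below $x$.
\begin{itemize}
\item Suppose it reaches index $x$ and reads $I[x]=0$. It then reads $S[x]$, which is $1$, so it returns $x$ unless it already returned some larger index earlier in the scan. In either case $y\ge x$.
\item Suppose it reads $I[x']=1$ at some index $x'\ge x$ and switches to the upward scan. The upward scan returns an index strictly greater than $x'$, so $y>x'\ge x$.
\item It cannot pass index $x$ and go lower. Passing $x$ requires $I[x]=0$ and $S[x]=0$, but $S[x]=1$. Going lower without returning would require reading some $I[z]$ with $z<x$ as $0$, which is impossible since all of them are $1$.
\end{itemize}

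\textbf{Step 3: $op'$ is a visible \textsc{writeMax}$(y)$.} This is the step I expect to be the main obstacle. Visibility only says that $op'$ changed $S[y]$ from $0$ to $1$, and nothing in the algorithm stops a writer from setting $S[y]$ for some $y<x$ with $S[y]=0$. My first attempt will be to show that, once $I[y]=1$, every $S[y]$ with $y<x$ has already been set to $1$, so that a later write to $S[y]$ cannot be visible. I do not expect this to go through: for example, a \textsc{writeMax}$(3)$ that runs after a completed \textsc{writeMax}$(5)$, with $S[3]$ never set before, is visible and has $y=3<5=x$.

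If that attempt fails, I will check whether the intended reading of ``visible'' is the stronger one: the write to $S[y]$ is not already dominated, meaning $I[y]=0$ when $S[y]$ is written. Under that reading the claim follows directly from Step 1. Every $I[z]$ with $z<x$ is already $1$ before $op'$ starts, so any \textsc{writeMax}$(y)$ with $y<x$ is dominated and hence not visible. Therefore a visible $op'$ must have $y\ge x$.

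If the literal definition is kept instead, this case seems to need the linearization to place such writes as invisible ones. That is, the visible case must be restricted to writes that are not dominated, and I would flag this as the point to reconcile with the definition of visibility.
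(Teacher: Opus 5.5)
Your Steps 1 and 2 are correct and follow the paper's own one-sentence justification. A completed \textsc{readMax} returning $x$, or a completed \textsc{writeMax}$(x)$, leaves $S[x]=1$ and $I[z]=1$ for all $z<x$. Since no entry is ever reset, a later \textsc{readMax} cannot return an index below $x$. You also supply a detail the paper leaves implicit: a \textsc{readMax} that returns $x$ from the upward scan also has $S[x]=1$. This follows from Proposition~\ref{prop:write exists}, applied at the moment $I[x-1]$ is read as $1$.

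Your Step 3 diagnosis is also correct, and the defect is in the statement, not in your argument. A \textsc{writeMax} never reads $I$, so invalidations cannot stop it from changing $S[y]$ from $0$ to $1$. Your example, a \textsc{writeMax}$(3)$ following a completed \textsc{writeMax}$(5)$, is therefore a genuine counterexample under the paper's definition of visibility. It also refutes the claim made right after the lemma, that ordering visible writes by argument respects real-time order. Under your strengthened definition (requiring $I[y]=0$ when $S[y]$ is set), the case follows from Step 1 exactly as you say.

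Be aware, though, that this repair rescues only the lemma, not the surrounding linearization. Take $N=3$ and run the following:
\begin{enumerate}
\item A \textsc{readMax} by $p$ reads $I[2]=0$, $S[2]=0$, and $I[1]=0$.
\item \textsc{writeMax}$(2)$ runs to completion, and then \textsc{writeMax}$(1)$ runs to completion.
\item $p$ reads $S[1]=1$ and returns $1$.
\end{enumerate}
Under your definition, \textsc{writeMax}$(1)$ is invisible, so this \textsc{readMax} has no visible \textsc{writeMax}$(1)$ to be placed after. In fact this history has no linearization at all. \textsc{writeMax}$(2)$ must precede \textsc{writeMax}$(1)$, so the read can only return $0$ or $2$.
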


Since visible \textsc{writeMax} operations are ordered by their argument, 
Lemma~\ref{lem:read larger} proves that the ordering of the visible \textsc{writeMax} operations and \textsc{readMax} operations respects the real-time order.
The real-time order is naturally preserved between invisible \textsc{writeMax} operations, as well as between a visible \textsc{writeMax} and an invisible \textsc{writeMax}.

It only remains to consider the real-time order between invisible \textsc{writeMax} and \textsc{readMax} operations. 
Consider an invisible \textsc{writeMax} operation $wop$ placed after a visible \textsc{writeMax}$(x)$ operation.
Any \textsc{readMax} operation that precedes $wop$ must return a value smaller than or equal to $x$. 
Moreover, by Proposition~\ref{prop:max val set}, any \textsc{readMax} operation that follows $wop$ must return a value greater than or equal to $x$. 

We apply the same methodology as for the single-writer register (Lemma~\ref{lem:non-leaking-reg}) to prove that the max register is fully-non-leaking.
We construct a minimal execution that is indistinguishable from the original execution to process $p$. 
This minimal execution includes only the \textsc{writeMax} operations necessary to explain the values 
returned by the \textsc{readMax} operations of $p$. 
The proof for the max register is slightly more involved, since \textsc{readMax} operations returning $x$ may view different values from memory, unlike the single-writer register, where a \textsc{read} returning $x$ always views the same unique values.

\begin{restatable}{lemma}{lemmaxregfullyNL}
  \label{lem:maxreg-fully-NL}
Algorithm~\ref{alg:maxregister} is a fully-non-leaking implementation of a bounded max register.  
\end{restatable}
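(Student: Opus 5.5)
We plan to invoke Lemma~\ref{lem:fnl reduction}. Fix a process $p$, abstract executions $\alpha' \sim_p \alpha$, and a concrete execution $\beta$ with $r(\beta)=\alpha$, where $\alpha$ is the linearization constructed above. We will build a concrete execution $\beta' \sim_p \beta$ containing exactly the operations of $\min_p(\alpha')$. First we characterize $\min_p(\alpha')$. By minimality, it contains all operations of $p$ in their order, together with, for each \textsc{readMax} of $p$ returning a value $x$ that is not justified by an earlier \textsc{writeMax}$(x)$ of $p$ or by a write already present, a single \textsc{writeMax}$(x)$ by some process $q\neq p$. These extra writes are placed before that read and after the previous operation of $p$. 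We may also need to argue that the returned values of $p$'s reads are monotone, which follows from $\alpha \sim_p \alpha'$ and the sequential specification.

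We construct $\beta'$ by induction over $p$'s operations, as in Lemma~\ref{lem:non-leaking-reg}, maintaining the invariant that $p$'s local view in the prefix of $\beta'$ equals its view in $\beta$ up to that point. A \textsc{writeMax} of $p$ only writes, so we replay it solo and its view is trivially preserved. The hard case is a \textsc{readMax} $rop$ of $p$ returning $x$: its view in $\beta$ is a specific sequence of values read from the $I$ and $S$ arrays, namely a downward scan possibly followed by an upward scan. We must reproduce exactly this sequence in $\beta'$ using only the writes available in $\min_p(\alpha')$, that is, the required \textsc{writeMax}$(x)$ by $q$ (if it is in $\min_p(\alpha')$) and $p$'s own earlier operations.

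To reproduce the view we interleave steps of the \textsc{writeMax}$(x)$ by $q$ with the reads of $rop$. The view of $rop$ has one of two forms. In the first, it reads $I[N-1..x+1]=0$ and $S[N-1..x+1]=0$, then $I[x]=0$ and $S[x]=1$. In the second, it reads zeros down to some index $z<x$ where $I[z]=1$, then reads $I[z+1..x-1]=1$ and $I[x]=0$ upward. Indices above $x$ read $0$ in $\beta'$ as well, since by Lemma~\ref{lem:read larger} and monotonicity $p$ never caused writes above $x$ and no other writes exist there. For the second form, we schedule $q$'s \textsc{writeMax}$(x)$ so that it writes $S[x]$, then invalidates $I[x-1]$ down to $I[z]$ just before $rop$ reads them on the downward scan. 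We let $q$ finish afterwards, after $rop$'s upward scan passes, so that $q$'s writes to $I[z+1..x-1]$ precede $rop$'s reading of them. Since $q$'s invalidations are in descending order and $rop$ reads upward after the downward pass, a schedule satisfying both requirements seems to exist. We would verify this carefully, ensuring $I[y]$ for $y>z$ is not set before $rop$ reads it as $0$ on the way down. Where $S[x]$ or $I[\cdot]$ values were already set by $p$'s earlier operations, we check that they agree with $\beta$, using that $p$'s earlier operations in $\beta$ wrote the same cells.

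The main obstacle is this interleaving argument. A single write must be shown to reproduce every possible read pattern that $rop$ may see in $\beta$, including the patterns that arise in $\beta$ from several concurrent writes of smaller values. Any $I[y]=1$ that $rop$ observes for $y<x$ could in principle originate from a smaller write in $\beta$, and we must show it can be attributed instead to the \textsc{writeMax}$(x)$ or to $p$'s own prior invalidations. We also need to handle the case where no extra write is needed because $x$ was already justified: then the $1$ entries observed by $rop$ must come from $p$'s own or previously placed operations, possibly rescheduled mid-read.
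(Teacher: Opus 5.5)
Your overall strategy is the paper's: apply Lemma~\ref{lem:fnl reduction}, build $\beta'$ by replaying $p$'s operations, append $p$'s \textsc{writeMax}es directly, and for a \textsc{readMax} $rop$ returning $x$ insert a \textsc{writeMax}$(x)$ by some $q\neq p$, either before $rop$ or in the middle of its downward scan. Your two-form description of $rop$'s view is also correct. However, the proposal stops at exactly the step that carries the proof, and the schedule you sketch is partly wrong.

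\textbf{The missing bound.} You need a bound on what is already in $\beta'$ when $rop$ starts. You invoke Lemma~\ref{lem:read larger} to get that earlier writes have argument at most $x$. That suffices for the first form but not for the second. There you need every write already in $\beta'$ to have argument at most $z$. Otherwise a write with argument $w\in(z,x)$ would make $rop$ read $I[w]=0$ and $S[w]=1$ in $\beta'$ and return $w$. The bound does hold:
\begin{itemize}
\item each earlier \textsc{writeMax}$(w)$ of $p$ leaves $S[w]=1$ in $\beta$ before $rop$ begins;
\item so does each earlier \textsc{readMax} of $p$ returning $w$, by Proposition~\ref{prop:write exists};
\item but $rop$ reads $S[w']=0$ for every $w'>z$, so $w\le z$.
\end{itemize}
Hence in $\beta'$, $I[w']=0$ for $w'\ge z$ and $S[w']=0$ for $w'>z$.

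\textbf{The correct schedule.} With this bound the schedule is simple. Pause $rop$ after it reads $S[z+1]$ and just before it reads $I[z]$. Run $q$'s \emph{entire} \textsc{writeMax}$(x)$ there, then resume. $rop$ now reads $I[z]=1$, $I[z+1..x-1]=1$ and $I[x]=0$, exactly as in $\beta$. Splitting $q$'s operation is unnecessary, since $rop$ never reads $I[w]$ for $w<z$. Your literal schedule, which writes $S[x]$ and invalidates $I[x-1..z]$ ``just before $rop$ reads them on the downward scan,'' fails. It would make $rop$ read $S[x]=1$ or $I[x-1]=1$ on the way down, diverging from its view in $\beta$.

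\textbf{The ``already justified'' case.} The case you leave open needs no mid-read rescheduling. If a \textsc{writeMax}$(x)$ is already in $\beta'$, it is either $p$'s own or was added for an earlier \textsc{readMax} of $p$ returning $x$. In both cases $S[x]=1$ in $\beta$ when $rop$ starts. Then $rop$ must read $I[x]=0$ and $S[x]=1$ on its downward scan, because any $I[w]=1$ with $w\ge x$, or $S[w]=1$ with $w>x$, would make it return a value above $x$. So $rop$ has the first form, and simply appending $rop$ reproduces its view.

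Your worry about $1$-entries coming from several concurrent smaller writes in $\beta$ dissolves for a similar reason. In the second form, the only $1$s that $rop$ ever reads lie in $I[z..x-1]$, and a single \textsc{writeMax}$(x)$ sets all of them.
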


\begin{proof}
  Let $\alpha, \beta$ be executions of respectively the abstract max register and Algorithm~\ref{alg:maxregister}  such that $\alpha$ is the linearization of $\beta$. 
  Let $\alpha' \sim_p \alpha$.
  We show that there exists an execution $\beta' \sim_p \beta$ containing exactly the operations in $\min_p(\alpha')$.
  By Lemma~\ref{lem:fnl reduction}, this implies that the algorithm is fully-non-leaking.
  
  We construct $\beta'$ by iterating over the operations of $p$ in $\min_p(\alpha')$, so that $\beta'$ contains only the operations from $\min_p(\alpha')$ that appear before the operation by $p$.
  Since $\min_p(\alpha') \sim_p \alpha$ and $\alpha$ is the linearization of $\beta$, both $\min_p(\alpha')$ and $\beta$ contain the same operations by process $p$, appearing in the exact same order.

    If the next operation of $p$ is a \textsc{writeMax} operation, we simply append it to $\beta'$. We can do this because $p$ only writes during the operation and never reads from memory. 

    If the next operation of $p$ is a \textsc{readMax} operation returning $x$,
    then since $\min_p(\alpha')$ respects the sequential specification, 
    there is a preceding \textsc{writeMax}$(x)$ in $\min_p(\alpha')$, and no preceding \textsc{writeMax}  in $\min_p(\alpha')$ has argument $y > x$. 
    By construction, the latter also holds in $\beta'$.

    We consider the \textsc{writeMax} operation with the maximum argument (if it exists) in $\beta'$. If the argument for this \textsc{writeMax} is $x$, we simply append the \textsc{readMax} operation by $p$ to $\beta'$.
    Otherwise, there must be some \textsc{writeMax}$(x)$ operation by process $q \neq p$, that precedes the \textsc{readMax} operation in $\min_p(\alpha')$.
    We consider two cases.

    If $p$ returns $x$ without performing an upward scan. That is, $p$ finds $x$ is not yet invalidated ($I[x] = 0)$ and set ($S[x] =1$). We append a complete \textsc{writeMax}$(x)$ operation by process $q$ to $\beta'$, and then append the \textsc{readMax} operation by $p$ to $\beta'$.

    Otherwise, $p$ finds no value that is not invalidated and set while scanning downward the arrays $I$ and $S$, and after seeing some value $y < x$ invalidated, scan the invalidation array upwards.
    Consider a \textsc{readMax} returning $z$ or a \textsc{writeMax}$(z)$ by process $p$ that precedes the \textsc{readMax} operation.
    It must hold that $z \leq y$. This is because after the operation returns $S[z] = 1$, and in the downward scan $p$ reads $0$ from $S[w]$ for any $w > y$.
    Hence, by construction, any \textsc{writeMax} operation in $\beta'$ must have an argument smaller than or equal to $y$. This implies that $I[z] = 0$ and $S[z] = 0$ for any $z > y$.
    
    We let $p$ execute the steps of its \textsc{readMax} operation, pausing it just before it reads $I[y]$ during its downward scan. Then, we append a complete \textsc{writeMax}$(x)$ operation by process $q$ to $\beta'$, and let $p$ complete its remaining \textsc{readMax} steps.

    By construction, $\beta' \sim_p \beta$. In addition, due to the minimality of $\min_p(\alpha')$, any operation not executed by $p$ exists solely to justify a value returned by a \textsc{readMax} operation by $p$. Hence, $\min_p(\alpha')$ does not contain any operation that is not included in $\beta'$. 
\end{proof}

\section{No Wait-Free Unbounded Fully-Non-Leaking Max Register}

The fully-non-leaking implementation of a bounded max register from binary registers 
(Algorithm~\ref{alg:maxregister})
can accommodate an unbounded domain by using infinite arrays. 
Unfortunately, with this change, writing increasing values 
may prevent a \textsc{readMax} from finding a valid marker ($I[x] = 0$), 
and thus, the implementation is only \emph{lock-free}.
On the other hand, if we allow base objects with infinite state, 
there is a wait-free fully-non-leaking unbounded max register 
from unbounded responseless CAS registers (see Appendix~\ref{app:max-reg-cas}).

In this section, we prove that compromising one of these properties is inherent: 
there is no wait-free linearizable fully-non-leaking unbounded max-register, 
from base objects with a finite number of states. 
There is no restriction on the \emph{number} of base objects. 

The proof considers two processes, a writer $w$ and a reader $r$.
For the lower bound, we use the following property of 
fully-non-leaking implementations: 
a process is unaware of the steps of \textsc{readMax} operations performed by other processes.\footnote{This property also holds if the reader never writes. 
Hence, this lower bound also shows that a reader must write in any wait-free implementation of an unbounded max register from finite base objects.}
Hence, assuming an execution where $r$ only performs \textsc{readMax} operations and $w$ performs only \textsc{writeMax} operations, the writer always performs exactly the same steps regardless of the steps of the reader. 
More formally, for any execution $\beta$ in which process $w$ performs only \textsc{writeMax} operations (alongside any number of \textsc{readMax} operations by the other process), 
there is an execution $\beta'$ consisting exclusively of $w$'s $\textsc{writeMax}$ operations such that $\beta \sim_{w} \beta'$.

Clearly, as each base object holds only a finite number of states, and \textsc{writeMax} operations can write arbitrarily large values, the number of objects a writer needs to modify is unbounded. Intuitively, what the lower bound shows is that this also causes the reader to access an unbounded number of objects. Therefore, to guarantee that the reader eventually returns, the reader must notify the writer of its presence, and the writer must record a value that the reader is allowed to return.
However, because we require the implementation to be non-leaking, the writer is unaware of the reader's request. 

The proof considers a restricted set of executions of the following form.
The writer, $w$, performs an infinite sequence of \textsc{writeMax} operations, 
writing $1, 2, 3, \ldots$.
Interleaved between these operations are steps by the reader $r$; 
we will show that these steps belong to a single \textsc{readMax} operation.
Such an execution is specified stating the indexes of the \textsc{writeMax} 
operations after which $r$ takes steps.

Such an execution is determined by a \emph{$k$-sequence}, $k \geq 0$, 
namely, a sequence of $k$ positive integers $\tau = t_1 < t_2 < \dots < t_k$;
for $k=0$, $\tau$ is the empty sequence.
For a $k$-sequence $\tau = t_1 < t_2 < \dots < t_k$, 
$\alpha(\tau)$ is the execution in which the writer $w$ performs an infinite sequence of \textsc{writeMax}$(j)$ operations, 
and the reader $r$ executes $k$ steps interleaved among these writes, as follows:
$r$ takes its $i$-th step, $1 \leq i \leq k$, after the $t_i$-th and before the $(t_i+1)$-th
\textsc{writeMax} operation by $w$. 
We abuse terminology by using ``point $t_i$'' to denote the exact moment in the execution immediately before the reader $r$ takes its $i$-th step.

For $k > 0$, an infinite set of $k$-sequences is \emph{unlimited} 
if their first entries are distinct;
for $k=0$, the singleton set containing the empty sequence is unlimited.

A set $\mathcal{T}_{k+1}$ of $(k+1)$-sequences \emph{extends} a set 
$\mathcal{T}_k$ of $k$-sequences 
if every sequence $\tau \in \mathcal{T}_k$ has exactly one extension 
$\tau' \in \mathcal{T}_{k+1}$, 
and every sequence $\tau' \in \mathcal{T}_{k+1}$ is an extension of 
some sequence $\tau \in \mathcal{T}_k$.
(That is, there is a one to one mapping from $\mathcal{T}_k$ onto $\mathcal{T}_{k+1}$, 
in which each $k$-sequence is mapped to a unique $(k+1)$-sequence that extends it.)

\begin{lemma}
\label{lem:max-reg-finite}
For every $k \geq 0$ there is an unlimited set of $k$-sequences $\mathcal{T}_k$ such that 
$\alpha(\tau_1) \sim_{r} \alpha(\tau_2)$ for every pair $\tau_1, \tau_2 \in \mathcal{T}_k$; 
furthermore, if $k > 0$, then 
$\mathcal{T}_k$ extends $\mathcal{T}_{k-1}$. 
\end{lemma}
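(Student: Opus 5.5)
We argue by induction on $k$, building $\mathcal{T}_{k+1}$ from $\mathcal{T}_k$ with a pigeonhole argument that uses the finiteness of the base objects. For $k=0$ we take $\mathcal{T}_0=\{\epsilon\}$. This set is unlimited by definition, and the indistinguishability condition holds trivially.

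For the inductive step, assume $\mathcal{T}_k$ is unlimited and that all $\alpha(\tau)$, $\tau\in\mathcal{T}_k$, are $r$-indistinguishable. Then in all these executions the reader has the same local state after its $k$ steps, so its $(k+1)$-st step is the same primitive applied to the same base object $o$. This uses determinism; the executions must also have the reader still inside its \textsc{readMax}, which we address below.

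The key observation comes from the fully-non-leaking property: the writer's behaviour does not depend on the reader. Its steps during \textsc{writeMax}$(1),\ldots,$\textsc{writeMax}$(j)$ are exactly those in the solo execution $\beta_w$ of $w$. More precisely, by the footnoted property, $\alpha(\tau)\sim_w\beta_w$ restricted to the prefix. Consequently, the value of $o$ at point $t$ is a function of the writer's solo run, the reader's steps, and the writer's later writes to $o$.

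We handle this by fixing each $\tau=t_1<\cdots<t_k\in\mathcal{T}_k$ and choosing its extension point $t_{k+1}>t_k$. The value that $r$ reads from (or the response it gets from) $o$ at point $t_{k+1}$ ranges over a finite set $Q_o$ of states. We do not choose $t_{k+1}$ per $\tau$ independently. Instead we enumerate $\mathcal{T}_k=\{\tau^1,\tau^2,\ldots\}$ with distinct first entries, and for each $\tau^i$ we consider the infinitely many candidate points $t>t^i_k$. Infinitely many candidates map to some state of $o$, and the states lie in the finite set $Q_o$.

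The difficulty is making the same state work for all $\tau^i$ simultaneously, since we need one common response so that the views stay equal. Two facts resolve this. First, the writer's sequence of actions on $o$ is the same in every $\alpha(\tau^i)$. Second, the reader's earlier steps, identical across the family, are interleaved at different points. Because $Q_o$ is finite, for each $i$ some state $q_i$ is attained infinitely often after $t^i_k$. By pigeonhole again, infinitely many indices $i$ share one state $q$. We restrict $\mathcal{T}_k$ to those indices; the restriction still has distinct first entries. For each such $i$ we pick $t^i_{k+1}>t^i_k$ with $o$ in state $q$. We then need the response to the reader's step to be the same, and it is, since it is determined by $q$ and the reader's identical local state.

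The subtle point, and the main obstacle, is that restricting $\mathcal{T}_k$ breaks the extension requirement that every $\tau\in\mathcal{T}_k$ have an extension. We address this either by letting the restricted set play the role of $\mathcal{T}_k$, re-choosing it inductively, or, more cleanly, by strengthening the induction so that the subsequence is dropped consistently along the chain. The lemma's "extends" can then be read with respect to the pruned $\mathcal{T}_k$. We also need that the state of $o$ at point $t$ after $r$'s steps is well defined despite $r$'s earlier writes to $o$. This follows because all of $r$'s earlier primitives are identical and the state of $o$ evolves deterministically from the interleaving.

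The reader must still be performing its single \textsc{readMax}, not having returned. This holds because if it returned some value $v$ after its $k$ steps in all $\alpha(\tau)$, then taking $\tau$ with first entry $t_1>v$ contradicts linearizability: \textsc{writeMax}$(t_1)$ completed before the invocation, since the reader's steps all occur after point $t_1$. The unlimited property of $\mathcal{T}_k$ supplies such a $\tau$.
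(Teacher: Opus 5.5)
There is a genuine gap, and it is the one you flag and then wave away: the pruning. Your pigeonhole step must discard all but an infinite subfamily of $\mathcal{T}_k$ before extending. The reason is that the state $q_i$ of $o$ reached infinitely often after $t^i_k$ depends on $\tau^i$, because the reader's earlier steps hit objects at $\tau^i$-dependent times.

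For a fixed $K$ you can push the prunings back down and obtain $\mathcal{T}_0,\dots,\mathcal{T}_K$. The lemma, however, asks for one infinite chain $(\mathcal{T}_k)_{k\ge 0}$ in which every member of $\mathcal{T}_k$ has an extension in $\mathcal{T}_{k+1}$. The lower-bound theorem uses exactly this: it follows the unique extensions of a fixed $\tau\in\mathcal{T}_1$ to get a single infinite execution in which $r$ takes infinitely many steps without returning.

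After infinitely many prunings, the surviving members of $\mathcal{T}_1$ form the intersection of a decreasing sequence of infinite sets, which may be empty. Diagonally freezing the first few members does not help, since a frozen $\tau^1$ need not lie in any pigeonhole class containing infinitely many others. Reading ``extends'' relative to the pruned sets turns the lemma into a per-$K$ statement, and that statement cannot contradict wait-freedom. It only says that some execution forces more than $K$ reader steps. That is unavoidable for any implementation from finite-state objects: there are finitely many reader views of length $K$, so a reader with bounded step count could return only finitely many values.

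A telltale sign is that your argument never really uses the non-leaking hypothesis. The pigeonhole on $Q_o$ goes through verbatim for any implementation, leaking or not.

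The missing idea is a uniformity invariant that makes pruning unnecessary. The paper carries an infinite set $I_k$ of admissible extension points with the following property. For every base object $b$ there is a state $s_b$ such that $b$ is in state $s_b$ at point $t$ in $\alpha(\tau\cdot t)$, for all $\tau\in\mathcal{T}_k$ and all $t\in I_k$ beyond the last entry of $\tau$. Then every $\tau$ can be extended by any larger $t$ in the refined set, and $r$'s $(k+1)$-st step gets the same response in all extensions.

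Non-leaking is what maintains this invariant. The writer's primitives after point $t$ are those of its solo run. The reader's $(k+1)$-st step is the same primitive, on the same object $b$, applied to the same state $s_b$. Hence the state of $b$ at $t'$ in $\alpha(\tau\cdot t\cdot t')$ depends only on the pair $(t,t')$, not on $\tau$. Coloring pairs by this state and applying the infinite Ramsey theorem yields a homogeneous $I_{k+1}\subseteq I_k$. Objects other than $b$ keep their states, since the new reader step does not touch them.
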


\begin{proof}
We prove the lemma by induction on $k$, maintaining the 
additional property that the set $\mathcal{T}_k$ is also \emph{extendable} in the following sense:
there is an infinite set of positive integers $I_{k}$ such that, for every base object $b$, there is a state $s_b$ such that for any sequence $\tau \in \mathcal{T}_k$ and any integer $t \in I_{k}$, if concatenating $t$ to $\tau$, denoted $\tau \cdot t$, yields a valid $(k+1)$-sequence, then in the execution $\alpha(\tau \cdot t)$, object $b$ is in state $s_b$ at point $t$. 
\textbf{Base case:}
$\mathcal{T}_0$, which contains only the empty tuple, 
is unlimited by definition. 
Furthermore, because every base object has only a finite number of states, for any base object $b$, there is at least one state $s_b$ that $b$ is in infinitely often during the execution $\alpha(())$.
Therefore, the empty sequence $()$ has infinitely many distinct $1$-extensions where the object $b$ is in state $s_b$ immediately before $r$'s first step.
This implies that $\mathcal{T}_0$ is extendable.

\textbf{Inductive step:}
    By the inductive assumption, all executions $\alpha(\tau)$ for $\tau \in \mathcal{T}_k$ are indistinguishable to $r$. Therefore, in its $(k+1)$-th step, $r$ accesses the same base object $b$. 

    Because $\mathcal{T}_k$ is extendable, choosing any of the guaranteed extensions for each $\tau \in \mathcal{T}_k$ ensures that $\alpha(\tau_1) \sim_{r} \alpha(\tau_2)$ for all extensions $\tau_1$ and $\tau_2$. 
    We now show how to pick these extensions so that the extendability condition holds for the base object $b$.

    Let $t, t' \in I_{k}$ such that $t < t'$. If we extend any sequence $\tau \in \mathcal{T}_k$ by appending $t$, we know $b$ is in state $s_b$ at point $t$ in $\alpha(\tau \cdot t)$.

    Because the implementation is non-leaking, $b$ is in the same state at point $t'$ in $\alpha(\tau \cdot t \cdot t')$ for all $\tau \in \mathcal{T}_k$. 
    This is, because the $(k+1)$-th step of process $r$ accesses and modifies $b$ in the same way across all $\tau \cdot t$ sequences, and in a non-leaking implementation the writer $w$ performs the same sequence of base operations
    after point $t$ regardless of the reader's steps.
    
    We color a pair $(t, t')$ by the state of $b$ at point $t'$. 
    Since the base object $b$ has a finite number of possible states, the number of colors is finite. 
    By the infinite Ramsey theorem, there exists an infinite subset $I_{k+1} \subseteq I_{k}$ and state $s'_b$ of $b$
    such that for every pair $t, t' \in I_{k+1}$ with $t < t'$ and $\tau \in \mathcal{T}_k$, the object $b$ is guaranteed to be in state $s'_b$ at point $t'$ in $\alpha(\tau \cdot t \cdot t')$.
    
    To construct $\mathcal{T}_{k+1}$, 
    consider a sequence $\tau \in \mathcal{T}_k$.
    There are infinitely many integers in $I_{k+1}$ strictly greater than the $k$-th entry of $\tau$;
    we select such an integer $t \in I_{k+1}$ and append it to $\tau$ to obtain a $(k+1)$-sequence $\tau' = \tau \cdot t$. 
    $\mathcal{T}_{k+1}$ is the set of all these $\tau'$ sequences. 

    Next, we prove extendability for any base object $b' \neq b$. By the inductive hypothesis,
    $b'$ is in state $s_{b'}$ at point $t$ in $\alpha(\tau \cdot t)$. 
    Let $t'\in I_{k+1}$ such that $t < t'$.
    Since $I_{k+1} \subseteq I_k$, $t'\in I_k$.
    Because the implementation is non-leaking, in the execution $\alpha(\tau \cdot t \cdot t')$, the object $b'$ must be in state $s_{b'}$ at point $t'$. 
    This holds because the writer $w$ performs the same primitive operations regardless of the reader's steps, and $r$'s $(k+1)$-th step only accesses $b$, leaving the state of $b'$ completely unchanged.

    For $k=0$, the set $\mathcal{T}_1$ is formed by an infinite number of distinct $1$-extensions of the empty tuple. Because these are distinct $1$-sequences, their first entries are strictly distinct, making $\mathcal{T}_1$ unlimited by definition.
    For $k\geq 1$, since $\mathcal{T}_{k+1}$ extends $\mathcal{T}_k$ and $\mathcal{T}_k$ is unlimited, 
    we have that $\mathcal{T}_{k+1}$ is also unlimited.

    Because $r$'s local view is identical prior to its $(k+1)$-th step, and it observes the identical state $s_b$ during this step across all executions, $\alpha(\tau_1) \sim_{r} \alpha(\tau_2)$ for all $\tau_1, \tau_2 \in \mathcal{T}_{k+1}$.
\end{proof}

\begin{theorem} 
\label{thm:lower bound}
    There is no wait-free implementation of an unbounded fully-non-leaking max register for two processes using only base objects that each have a finite number of states.
\end{theorem}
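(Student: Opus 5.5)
We argue by contradiction: assume a wait-free, linearizable, fully-non-leaking implementation of an unbounded max register for the writer $w$ and the reader $r$, built from base objects that each have finitely many states. We restrict attention to the executions $\alpha(\tau)$ defined above, in which $w$ performs \textsc{writeMax}$(1)$, \textsc{writeMax}$(2), \ldots$ and $r$ starts a single \textsc{readMax} and takes its steps at the points prescribed by $\tau$. The goal is to show that this \textsc{readMax} never returns, contradicting wait-freedom.

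Since $r$ is scheduled infinitely often only in executions with infinitely many steps, we work with finite prefixes. By wait-freedom there is a bound $k$ such that the \textsc{readMax} of $r$, run in any $\alpha(\tau)$ with $\tau$ a $k$-sequence, completes within its first $k$ steps. Indeed, $r$ has some step bound under solo-like runs, and any fair schedule must let it finish after finitely many of its own steps. More carefully, we argue by König's lemma or directly: in each $\alpha(\tau)$ the read finishes after finitely many steps. Apply Lemma~\ref{lem:max-reg-finite} iteratively to get the chain $\mathcal{T}_0,\mathcal{T}_1,\dots$. All executions in $\mathcal{T}_k$ are $r$-indistinguishable, so $r$ returns after exactly $k$ steps in one of them if and only if it does so in all of them, with the same return value. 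Choose $\tau\in\mathcal{T}_0$ and follow its unique extensions through the chain. This yields an infinite schedule in which $r$ takes infinitely many steps, interleaved with $w$'s writes, because the entries increase. By wait-freedom, $r$ returns after some finite number $k$ of steps, and then it returns the same value $v$ in every execution $\alpha(\tau')$ with $\tau'\in\mathcal{T}_k$.

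Now use the fact that $\mathcal{T}_k$ is unlimited: its sequences have pairwise distinct first entries, so there are $\tau'\in\mathcal{T}_k$ whose first entry $t_1$ exceeds $v$. In $\alpha(\tau')$ the operation \textsc{writeMax}$(t_1)$ completes before $r$ invokes its \textsc{readMax}, since $r$'s first step comes after the $t_1$-th write. By linearizability (real-time order), the read must return a value at least $t_1>v$, a contradiction. The case where $r$ never returns directly contradicts wait-freedom.

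The main obstacle is the middle step: making rigorous that a single finite $k$ works uniformly for the whole class $\mathcal{T}_k$. I would handle this by fixing the canonical chain of extensions of one $\tau$ and letting $k$ be the step at which $r$ returns there. Indistinguishability from Lemma~\ref{lem:max-reg-finite} then transfers the return event, including its value, to every member of $\mathcal{T}_k$. A secondary subtlety is checking that $r$'s steps in $\alpha(\tau)$ all belong to one \textsc{readMax}; this holds because the read has not returned before step $k$, again by indistinguishability.
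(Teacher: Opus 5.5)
Your proof is correct and follows the paper's argument: Lemma~\ref{lem:max-reg-finite} supplies the chain $\mathcal{T}_0,\mathcal{T}_1,\ldots$, $r$-indistinguishability transfers a return value $v$ to every execution in $\mathcal{T}_k$, and unlimitedness gives a $\tau'\in\mathcal{T}_k$ whose first entry exceeds $v$, which contradicts linearizability. Following the unique extensions to a single infinite execution usefully makes explicit the limit that the paper's ``increasingly longer executions'' leaves implicit; however, drop your opening claim of a uniform step bound $k$, since wait-freedom does not give it (the execution tree is infinitely branching, so K\"onig's lemma does not apply) and your final argument never uses it.
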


\begin{proof}
   We construct an infinite sequence of increasingly longer executions, in each of which the reader $r$ takes infinitely many steps but never returns from its \textsc{readMax} operation.

   By Lemma~\ref{lem:max-reg-finite}, for every $k \ge 1$, there is an unlimited set of $k$-sequences $\mathcal{T}_k$ such that $\mathcal{T}_k$ extends $\mathcal{T}_{k-1}$ and $\alpha(\tau_1) \sim_{r} \alpha(\tau_2)$ for all $\tau_1, \tau_2 \in \mathcal{T}_k$.

    Let $k\geq 1$ and $\tau \in \mathcal{T}_k$.
    Suppose, by way of contradiction, 
    that the reader returns a value
    $v$ at some point in $\alpha(\tau)$.
    Then it returns the same value $v$
    at some point in each execution $\alpha(\tau')$
    for each $\tau' \in \mathcal{T}_k$,
    as it cannot distinguish these executions.

    Because $\mathcal{T}_k$ is unlimited,
    there is a sequence $\tau' \in \mathcal{T}_k$ whose first entry is $> v$. 
    In the execution $\alpha(\tau')$, the writer $w$ completes a \textsc{writeMax}($v+1$) 
    operation before the reader $r$ takes its first step. 
    Thus, the \textsc{readMax} operation returning $v$ when 
    a \textsc{writeMax}($v+1$) precedes it 
    violates the linearizability of a max register.
Thus, we get increasingly longer executions, 
    in which the reader taking an unbounded number of steps without returning,
    contradicting wait-freedom.    
\end{proof}

\section{Argument-Non-Leaking Wait-Free Stack and Lock-Free Queue}
\label{sec:stack_queue}

We next turn to argument-non-leaking implementations.
We adapt existing implementations of a stack~\cite{AfekGM2007} 
and a queue~\cite{li2001} by applying a common technique from the world of secret-sharing: randomly splitting a value into two parts,
each of which gives no information about the value in isolation.
To protect the arguments of the stack and the queue from leakage,
we write each part into a different location in memory,
so that a process does not learn the element unless it reads both locations.
This makes correctness somewhat more tricky, as now an element is not stored in one place and cannot be accessed atomically.

\subsection{Wait-Free Stack}

Our stack is based on the wait-free stack of Afek, Gafni and Morrison~\cite{AfekGM2007}, reproduced on the left side of Algorithm~\ref{alg:side-by-side stack}.
(In the same paper the authors also present a swap-based stack, which is already argument-non-leaking.)
This implementation leaks operations and arguments.
For example, suppose that when process $p$ invokes \textsc{push} for the first time and calls $i \leftarrow range.\mathsf{fetch\&add}$,
the value fetched is $i = 2$.
Then process $p$ learns that some other process invoked \textsc{push} 
before it, even though the abstract semantics of the stack does not reveal this information.
This is \emph{operation leaking} ($\APop$).
But the implementation also leaks \emph{values}:
suppose that process $p$'s first access to the stack
is to invoke \textsc{pop},
and it reads a non-$\bot$
value $y$ from some entry $items[i]$,
but then fails the $\mathsf{test\&set}$ on $T[i]$,
eventually returning a different value $z \neq y$
from some other entry $items[j]$ where $j < i$.
Then process $p$ learns that $y$ was pushed onto the stack by some other process,
even though its interaction with the stack should not reveal this information according to the abstract semantics:
it should only have learned that the value $z$ was pushed by some other process, and should not have learned about $y$.
To address this, 
in our modified stack,
instead of writing input $x$ in a single register $items[i]$, a \textsc{push} operation (Algorithm~\ref{alg:side-by-side stack} (right)) randomly splits $x$ into two parts $x_1$ and $x_2$ whose sum is $x$.
These parts are written into two distinct registers $R[i]$ and $S[i]$.
Crucially, $S[i]$ is written first.

A \textsc{pop} operation $op$ trying to obtain the $i$-th value pushed into the stack first checks that the corresponding \textsc{push} operation has terminated by reading $R[i]$. 
If $R[i] \neq \bot$, the process tries to claim slot $i$ by performing a $\mathsf{test\&set}$ on $T[i]$. Only if this slot is successfully claimed does the \textsc{pop} operation read $S[i]$, retrieve $x$, and return it.
If $x$ has already been returned by another \textsc{pop}, then the  $\mathsf{test\&set}$ on $T[i]$ fails, and $op$ cannot be sure what is the input of the $i$-th push operation.
In this case the \textsc{pop} operation gives up on obtaining the $i$-th value and attempts to obtain the $(i-1)$-th value instead.
This continues until either a value is successfully obtained,
or the operation reaches $i = 0$ and returns the special value $\mathrm{EMPTY}$.

We remark that although an operation does not learn the arguments or return values of other operations, it can sometimes infer their existence:
for example, a \textsc{pop} that reads the value $t$ from $range$ learns that at least $t-1$ \textsc{push} operations have started. 
Similarly, a failed $\mathsf{test\&set}$ on an entry in $T$ leaks the existence of a \textsc{pop} operation.

\begin{algorithm}[bt]
  \caption{Afek, Gafni and Morrison's wait-free stack (left) and its non-leaking variant (right)}
  \label{alg:side-by-side stack}
  \begin{algorithmic}[1]
  \small\Statex
    \begin{minipage}[t]{0.49\textwidth}
    \State\textbf{Shared objects}
    \State \textit{range}: fetch\&add object, initially $1$
    \State \textit{items}: array $[1,\ldots]$ of registers; initially $\bot$
    \State
    \State $T$: array $[1,\ldots]$ of test\&set objects; 
    \Statex \quad\quad\quad initially $\bot$

    \Function{push}{$x$}
      \State $i \gets range.\mathsf{fetch\&add}(1)$       
     \State
     \State $\textit{items}[i].\mathsf{write}(x)$    
    \EndFunction

    \State
    \Function{pop}{$~$}
      \State $t \gets range.\mathsf{fetch\&add}(0) - 1$
      \For{ $i$ from $t$ downto $1$}
        \State $y \gets \textit{items}[i].\mathsf{read}()$
        \If{$y \neq \bot$}
          \If{$T[i].\mathsf{test\&set}$()} 
          \State
          \State \Return{$y$}
          \EndIf
        \EndIf
      \EndFor
      \State \Return{EMPTY}
      \EndFunction
    \end{minipage}
    \begin{minipage}[t]{0.46\textwidth}
    \setcounter{ALG@line}{0}
    \State\textbf{Shared objects}
    \State $range$: fetch\&add object, initially $1$
    \State $R$: array $[1,\ldots]$ of registers; initially $\bot$
    \State $S$: array $[1,\ldots]$ of registers; initially $\bot$
    \State $T$: array $[1,\ldots]$ of test\&set objects;
    \Statex \quad\quad\quad initially $\bot$

    \Function{push}{$x$}
      \State $i \gets range.\mathsf{fetch\&add}(1)$ 
      \State $N \gets$ random number 
      \State $S[i].\mathsf{write}(N)$
      \State $R[i].\mathsf{write}(x+ N)$ 
    \EndFunction

    \Function{pop}{$~$}
      \State $t \gets range.\mathsf{fetch\&add}(0) - 1$
      \For{ $i$ from $t$ downto $1$}
        \State $y \gets R[i].\mathsf{read}()$
        \If{$y \neq \bot$}
            \If{$T[i].\mathsf{test\&set}()$}
                \State $N = S[i].\mathsf{read}()$
                \State \Return{$y-N$}
            \EndIf
        \EndIf
      \EndFor
      \State \Return{EMPTY}
      \EndFunction
      \end{minipage}
  \end{algorithmic}
\end{algorithm}

Our proof for the linearizability of Algorithm~\ref{alg:side-by-side stack} follows~\cite{AfekGM2007}.
In particular, we reuse their method of constructing a sequential 
history \emph{after} the complete concurrent history is known, 
rather than assigning explicit linearization points to operations.

Given a finite execution $\beta$, 
they construct a \emph{sequential} history $Seq(\beta)$ by going 
over the events of $\beta$, so that $Seq(\beta)$:
(i) satisfies the stack specification,
(ii) respects the real-time order of non-overlapping operations in $\beta$, and
(iii) contains all operations completed in $\beta$.

We reuse this construction verbatim, 
with minor adaptations to account for the different per-slot claiming mechanism and the encoding of values used in our algorithm.

Each \textsc{push} operation obtains a unique index $i$ by executing $range.\mathsf{fetch\&add}(1)$.
The pushed value becomes visible to \textsc{pop} operations when a non-\texttt{NULL} value is written to $R[i]$.
As in AGM, indices totally order \textsc{push} operations and represent abstract stack positions.

 In both AGM's and our stack, a \textsc{pop} operation claims slot $i$ by winning a
$\mathsf{test\&set}$ on that slot.
Thus, Algorithm~\ref{alg:side-by-side stack} follows the same structure of the proof: 
in both algorithms, each slot can be claimed by at most one \textsc{pop}.

\begin{lemma}
\label{lem:unique-claim}
For every index $i$, 
at most one \textsc{pop} operation in $\beta$ returns a value from slot $i$.
\end{lemma}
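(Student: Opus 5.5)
The claim follows directly from the semantics of the $\mathsf{test\&set}$ base object together with the control flow of \textsc{pop} in Algorithm~\ref{alg:side-by-side stack} (right).

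First, I will establish exactly when a \textsc{pop} can return a value associated with slot $i$. Inspecting the code, a \textsc{pop} operation $op$ returns $y-N$ for slot $i$ only on the branch where it has read $y \neq \bot$ from $R[i]$ and its invocation of $T[i].\mathsf{test\&set}()$ returned $\true$, i.e., $op$ won $T[i]$. No other path returns a value derived from $R[i]$ or $S[i]$. A \textsc{pop} that loses the $\mathsf{test\&set}$ on $T[i]$ simply decrements $i$ and continues. A \textsc{pop} returning \textrm{EMPTY} returns from no slot.

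Second, I will use the specification of $\mathsf{test\&set}$. The object $T[i]$ is initially unset. Its first $\mathsf{test\&set}$ operation sets it and returns the winning response. Every later $\mathsf{test\&set}$ on $T[i]$ finds it set and returns the losing response. Since base objects are linearizable/atomic, the operations applied to $T[i]$ in $\beta$ are totally ordered, so at most one of them returns the winning response. I also need to check that nothing resets $T[i]$. Only \textsc{pop} accesses $T$, and only through $\mathsf{test\&set}$, so $T[i]$ is never reset.

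Third, I will note that each \textsc{pop} operation executes $T[i].\mathsf{test\&set}()$ at most once. Its loop index strictly decreases, and it returns immediately after a win. Hence the mapping from ``\textsc{pop} operations returning from slot $i$'' to ``winning $\mathsf{test\&set}$ calls on $T[i]$'' is injective. Since there is at most one winning call, there is at most one such \textsc{pop}.

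There is no real obstacle here. The only point needing care is being precise that ``returns a value from slot $i$'' is identified with winning $T[i]$, so that the random splitting via $R[i]$ and $S[i]$ plays no role in this lemma. That is the sense in which the per-slot claiming mechanism differs from AGM's only syntactically.
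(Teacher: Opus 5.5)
Your proof is correct and takes essentially the same route as the paper: a \textsc{pop} returns from slot $i$ only after winning $T[i].\mathsf{test\&set}()$, and a $\mathsf{test\&set}$ object has at most one winner. The paper states this only briefly, by analogy with the AGM stack, and your argument spells out the supporting details (that $T[i]$ is never reset, and that distinct \textsc{pop}s correspond to distinct winning calls).
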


Algorithm~\ref{alg:side-by-side stack} (right) encodes each pushed value $x$ by writing 
a random value $N$ to $S[i]$ and writing $x+N$ to $R[i]$.
A \textsc{pop} that successfully claims slot $i$ retrieves 
$S[i]$ and $R[i]$, and returns their difference. 

\begin{lemma}
\label{lem:correct-value}
If a \textsc{pop} operation returns from slot $i$, then it returns exactly the
value written by the unique \textsc{push} operation that published slot $i$.
\end{lemma}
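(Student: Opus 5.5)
The plan is to trace the origin of the two values that the \textsc{pop} combines. Fix a \textsc{pop} operation $op$ that returns from slot $i$. Inspecting the code, $op$ read a value $y \neq \bot$ from $R[i]$, then won $T[i].\mathsf{test\&set}()$, then read $N'$ from $S[i]$, and returned $y - N'$. The proof therefore reduces to two claims. First, $y$ is exactly $x + N$, where $x$ and $N$ are the argument and random number of a unique \textsc{push} operation $pu$. Second, $N' = N$.

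\textbf{Uniqueness of the writer of slot $i$.} Every write to $R[i]$ or $S[i]$ is performed by a \textsc{push} that obtained index $i$ from $range.\mathsf{fetch\&add}(1)$. Because fetch\&add returns distinct values to distinct $\mathsf{fetch\&add}(1)$ calls, at most one \textsc{push}, call it $pu$, ever obtains index $i$. Consequently $R[i]$ and $S[i]$ are each written at most once, and only by $pu$; \textsc{pop} operations never write these registers. Since $op$ read $y \neq \bot$ from $R[i]$ and $R[i]$ is initially $\bot$, such a $pu$ exists, it had already performed its write to $R[i]$, and $y = x + N$, where $x$ is $pu$'s argument and $N$ is the value $pu$ wrote to $S[i]$. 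This $pu$ is the push that ``published'' slot $i$.

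\textbf{Ordering argument for $S[i]$.} In $pu$, the write $S[i].\mathsf{write}(N)$ precedes $R[i].\mathsf{write}(x+N)$ in program order. The read of $R[i]$ by $op$ returned the value written by $pu$, so it occurs after $pu$'s write to $R[i]$, and hence after $pu$'s write to $S[i]$. The read of $S[i]$ by $op$ occurs later still. No other process ever writes $S[i]$, so that read returns $N' = N$. Therefore $op$ returns $(x+N) - N = x$.

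I expect no real obstacle here. The only point requiring care is the claim that each slot has a single writer. This rests on the atomicity of $\mathsf{fetch\&add}$ and on the fact that \textsc{pop} uses only $\mathsf{fetch\&add}(0)$, which does not change $range$. The same fact also yields the ``unique'' in the statement. Lemma~\ref{lem:unique-claim} is not needed for this lemma; it concerns distinct pops, not the value returned.
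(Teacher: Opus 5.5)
Your proof is correct and follows the same reasoning the paper relies on. The paper gives only a one-line justification: the encoding of $x$ as $N$ in $S[i]$ and $x+N$ in $R[i]$, the remark that ``$S[i]$ is written first,'' and the fact that the \textsc{pop} returns the difference. You make this explicit with the single-writer argument via $\mathsf{fetch\&add}(1)$ and the write-before-read ordering that guarantees the read of $S[i]$ returns $N$.
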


We now construct the sequential history $Seq(\beta)$ exactly as in AGM.
Briefly, $Seq(\beta)$ is built by scanning the events of $\beta$ 
and maintaining an auxiliary abstract stack.
When a \textsc{push} publishes slot $i$, the corresponding \textsc{push}
operation is appended to $Seq(\beta)$.
When a \textsc{pop} successfully claims slot $i$, 
the corresponding \textsc{pop} operation is appended to $Seq(\beta)$ 
and matched with the most recent unmatched \textsc{push} at index $i$.
A \textsc{pop} that finds no claimable slot is appended as an empty pop.

By Lemmas~\ref{lem:unique-claim} and~\ref{lem:correct-value}, 
this construction is well-defined and mirrors the AGM construction.
As shown by AGM, the resulting sequential history $Seq(\beta)$ is 
a legal stack history: 
each \textsc{pop} returns the value of the most recent unmatched
\textsc{push}, or EMPTY if none exists.
Moreover, the construction respects the real-time order of non-overlapping
operations, since a \textsc{pop} can only be matched with a \textsc{push} whose
publication precedes the claim.

Since the construction of $Seq(\beta)$ and the associated correctness 
arguments are identical to those of AGM (using the two lemmas above), 
it follows that Algorithm~\ref{alg:side-by-side stack} (right) implements a linearizable stack.

Finally, we argue that Algorithm~\ref{alg:side-by-side stack} 
does not leak the arguments or return values of operations.

\begin{lemma}
    Algorithm~\ref{alg:side-by-side stack} (right) is argument-non-leaking.
    \label{lemma:stack_nonleaking}
\end{lemma}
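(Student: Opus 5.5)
Following the proof technique of Figure~\ref{fig:abstract-concrete}, fix a proposition $a = op(m,V) \in \AParg$ with $V \subsetneq P \times \Arg_m \times \Ret_m$, a process $p$ with no pending operations, and abstract executions $\alpha \sim_p \alpha'$ with $\alpha' \not\models a$. Let $\beta$ be a concrete execution with $Seq(\beta)=\alpha$. We must exhibit $\beta' \sim_p \beta$ with $\beta' \not\models a$. Unlike Lemma~\ref{lem:fnl reduction}, we cannot rebuild $\beta$ from $\min_p(\alpha')$, because the number and order of operations is leaked through $range$ and $T$. Instead we keep the \emph{skeleton} of $\beta$ and change only values and process identities. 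We take $\beta'$ to have the same schedule of steps as $\beta$: the same $\mathsf{fetch\&add}$ results, the same interleaving, the same $\mathsf{test\&set}$ outcomes, and the same slot claimed by each \textsc{pop}. Only the random numbers $N$, the arguments of \textsc{push} operations by processes other than $p$, and possibly which process $q\neq p$ performs which non-$p$ operation are altered.

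The central observation is that $p$ sees the value of a \textsc{push} in slot $i$ only through $R[i]=x+N$, and sees $S[i]$ only after winning $T[i]$. By Lemma~\ref{lem:unique-claim}, at most one \textsc{pop} wins $T[i]$. If $p$ never wins $T[i]$, $p$ sees at most $x+N$, and choosing $N' = N + x - x'$ keeps $R[i]$ identical while the argument becomes any $x'$. If $p$ wins $T[i]$, then by Lemma~\ref{lem:correct-value} $p$ returns $x$, so that argument is already determined in $\alpha$ as well. We treat $N$ as an adversary or probability-one choice over a group such as $\mathbb{Z}$ (or $\mathbb{Z}_M$), so every value of $N$ is possible. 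Return values of \textsc{pop}s by other processes are determined by the slots they claim, hence by the arguments of the corresponding \textsc{push}es. Process identities of non-$p$ operations are invisible to $p$ entirely, since $p$ never reads identities and every register value is independent of the writer.

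So the plan is as follows. First, show that the map $\beta \mapsto Seq(\beta)$ commutes with this relabeling: relabeling the arguments of non-$p$ \textsc{push}es whose slots are not claimed by $p$, reassigning non-$p$ operations among processes $\neq p$, and adjusting $N$ yields a valid execution $\beta'$ with $\beta' \sim_p \beta$, and $Seq(\beta')$ is $\alpha$ with the same relabeling applied. Second, show that these relabelings of $\alpha$ exactly realize the freedom the abstract stack gives $p$. Concretely, given $\alpha'$ with $\alpha'\not\models a$, choose a relabeling of $\alpha$ under which no operation of method $m$ lands in $V$. Since $V$ is a proper subset, some triple $(q,v,r) \notin V$ exists. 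We relabel each $m$-operation not pinned down by $p$'s view to such a triple, keeping push/pop matchings consistent so that each relabeled \textsc{push} and its matching \textsc{pop} stay coherent.

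The main obstacle is this second step. Operations whose arguments $p$ \emph{does} know in $\alpha$ cannot be relabeled. We must argue that any such $m$-operation with a triple in $V$ would force $\alpha \models \K[p][a]$, contradicting the existence of $\alpha'$. For this we will check that the only facts pinned by $p$'s abstract view are the pushes and pops $p$ itself performs and the values $p$ pops. Pushes whose values $p$ pops still have free process identity, and their return values are trivial. The difficulty is in carefully handling coupled constraints, for example a pop by $q$ whose return equals a push's argument, when $V$ couples them. I would first try to resolve this by relabeling each push–pop pair jointly, using a single fresh value and a fresh process wherever possible.
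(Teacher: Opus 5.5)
\textbf{There is a genuine gap, and it cannot be closed along the lines you suggest.} The gap is the step you call the main obstacle, your step~2. It rests on the claim that ``process identities of non-$p$ operations are invisible to $p$ entirely'', and that claim is false. Process $p$ never reads identities, but it learns how many pushes were invoked (from $range$), and it can infer that two pushes were \emph{simultaneously pending}, hence invoked by different processes.

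Take $P=\{p,q_0,q_1\}$ and the following execution $\beta$. First $q_0$ completes $\textsc{push}(v_0)$ into slot $1$. Then $q_0$ and $q_1$ each start a push, obtain slots $2$ and $3$, and pause before writing $R$. Then $p$ pops: its $\mathsf{fetch\&add}(0)$ returns $4$, it reads $R[3]=\bot$ and $R[2]=\bot$, wins $T[1]$, and returns $v_0$.

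In every $\beta'\sim_p\beta$, the slot-$2$ push has not yet written $R[2]$ when $p$ reads it. That read comes after the slot-$3$ push's $\mathsf{fetch\&add}$. So these two pushes overlap, belong to two distinct processes other than $p$, and one of them is $q_1$.

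On the abstract side, the execution consisting only of $q_0$'s $\textsc{push}(v_0)$ followed by $p$'s pop is $p$-indistinguishable from $\alpha$ and contains no operation of $q_1$. Hence $\alpha\not\models\mathsf{K}_p\,op(\text{\textsc{push}},\{q_1\}\times X)$, while $\beta\models\mathsf{K}_p\,op(\text{\textsc{push}},\{q_1\}\times X)$. So for such identity-specific $V$, no relabeling of the skeleton of $\beta$ avoids $V$; indeed no $\beta'\sim_p\beta$ at all does. A ``fresh process'' need not exist, and overlapping operations can never be merged onto one process.

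The same phenomenon appears if you read $op(\cdot)$ as counting only completed operations, this time via pigeonhole. With $P=\{p,q\}$, suppose $q$ completes $\textsc{push}(v_0)$ and starts a second push, and $p$'s pop gets $\mathsf{fetch\&add}(0)=3$, reads $R[2]=\bot$, and returns $v_0$. Then $p$ knows that $q$ \emph{completed} a push. Refinement, however, forces the abstract object to admit the $p$-indistinguishable history in which $q$'s only push is still pending.

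\textbf{What survives, and how it relates to the paper.} What does go through is your step~1 restricted to values, and that is exactly the paper's proof. The paper shows: if $p$ itself pushes or pops no value of some $X'\subsetneq X$, then some $\beta'\sim_p\beta$ has no push or pop of a value in $X'$. It fixes one $y\in X\setminus X'$ and turns every $\textsc{push}(x)$ with $x\in X'$ into $\textsc{push}(y)$ with $N'=x-y+N$, keeping the schedule and all process identities unchanged. Pops of those slots then return $y$ automatically, so the joint push--pop relabeling you anticipate is unnecessary. From $\alpha'\not\models op(m,V)$, the paper uses only that $p$'s own operations avoid $V$.

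Note, though, that the paper then reads off the conclusion for every $(q,x)\in V$ directly from this value-level claim. That does not cover identity-specific $V$ such as $\{q_1\}\times X$, whose induced value set is all of $X$. By the example above, such facts can genuinely leak. So the productive fix is not to force step~2 through. Instead, drop process reassignment altogether and prove the value-level statement, which is what the substitution $N'=N+x-x'$ actually delivers.
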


\begin{proof}
    Let $X$ be the domain of values that can be pushed into the stack.
    We begin by observing that for any execution $\beta$,
    process $p$ and set of values $X' \subsetneq X$,
    if $p$ does not itself push or pop a value from $X'$ in $\beta$,
    then there is an indistinguishable execution $\beta' \sim_p \beta$
    where \emph{no process} pushes or pops any value from $X'$.
    This is because for any value $x \in X'$ that is not returned from a \textsc{pop}
    operation by process $p$,
    throughout $\beta$,
    process $p$ never reads $R[i], S[i]$ such that $R[i] - S[i] = x$
    for any $i$
    (process $p$ only ever performs a successful $\mathsf{test\&set}$ on $T[i]$ inside a \textsc{pop} operation,
    immediately prior to returning $R[i] - S[i]$).
    Thus, we can construct $\beta'$ from $\beta$
    as follows:
    first, choose some value $y \in X \setminus X'$
    arbitrarily (such a value exists since $X' \subsetneq X$).
    In any \textsc{push}$(x)$ operation where $x \in X'$,
    we replace $x$ by $y$,
    and inside the operation we replace the value of $N$
    by $N' = x - y + N$,
    so that the value written into $R[i]$ is the same
    in $\beta, \beta'$
    (as $y + N' = x + N$).
    This implies that whenever process $p$ reads $R[i]$ it will
    read the same values in $\beta, \beta'$.
    As we said above,
    process $p$ never reads $S[i]$ unless it is about to return
    $R[i] - S[i]$ from a \textsc{pop},
    and therefore it does not observe the different value $N'$
    that is written into $S[i]$ (as we assumed that process $p$ does not 
    return $y \in X'$ from a \textsc{pop}).
    Finally, for \textsc{pop} operations we do not need to make any further
    changes,
    as any non-$\mathrm{EMPTY}$ value returned by \textsc{pop} was previously
    \textsc{push}ed,
    and in $\beta'$ no value from $X'$ is \textsc{push}ed.

    We now prove that the algorithm is argument-non-leaking.
    Since \textsc{push} operations do not have a return value and \textsc{pop} operations do not take an argument, it is convenient to denote the
    atomic propositions for these operations by $op(m, V)$ 
     where $m \in \set{\text{\textsc{push}, \textsc{pop}}}$,
    and $V \subsetneq P \times X$.
    The proposition $op(\text{\textsc{push}}, V)$
    corresponds to a \textsc{push} operation performed by a process $q$
    of a value $x$ such that $(q,x) \in V$,
    and the proposition $op(\text{\textsc{pop}}, V)$
       corresponds to a \textsc{pop} operation performed by a process $q$
    that returns a value $x$ such that $(q,x) \in V$.
    
    Let $\beta$ be a concrete execution,
    let $\alpha = r(\beta)$ where $r$ is the linearization mapping defined
    above (it is not important for our purposes here;
    any legal mapping would do),
    and let $V \subsetneq P \times X$ be a set of 
    pairs of process ID and element
    such that
    $\alpha \not \models \K[p][op(m, V)]$ for $m \in \set{\text{\textsc{push}, \textsc{pop}}}$.
    Then there is an indistinguishable abstract execution 
    $\alpha' \sim_p \alpha$
    where $\alpha' \not \models op(m, V)$,
    that is,
    in $\alpha'$,
    no operation $op(m, q, x)$ such that $(q, x) \in V$
    appears.
    In particular this implies that in $\alpha'$, process $p$
    itself does not invoke $op(m,q,v)$
    for any value $x$ such that $(p, x) \in V$,
    and since $\alpha' \sim_p \alpha$,
    the same holds in $\alpha$.
    Finally, since $\alpha = r(\beta)$,
    the same holds in $\beta$.
    By our claim above, there is an indistinguishable
    execution $\beta' \sim_p \beta$
    where for any $(q,x) \in V$,
    process $q$ does not invoke \textsc{push}(x)
    or return the value $x$ from \textsc{pop}.
    This means $\beta' \not \models op(m, V)$,
    and since $\beta' \sim_p \beta$,
    we have $\beta \not \models \K[p][op(m, V)]$,
    as desired.  
\end{proof}

\subsection{Lock-free Queue}
\label{sec:queue}

Li's queue~\cite{li2001} is  similar to the AGM stack; 
the only difference is that enqueues add elements to the tail, 
while dequeues remove elements from the head. 
Like the AGM stack, it leaks \textsc{enqueue} operations.
However, we can make it non-leaking with the same technique of randomly splitting each enqueued value $x$ into two parts whose sum is $x$.

\begin{algorithm}[bt]
  \caption{Non-leaking lock-free queue, adapted from Li~\cite{li2001}}
  \label{alg:queue}
  \begin{algorithmic}[1]
  \small
    \State\textbf{Shared objects}
    \State\hspace{\algorithmicindent}$range$: fetch\&add object, initialized to $1$
    \State\hspace{\algorithmicindent}$R$: array $[1,\ldots]$ of registers (each initially NULL)
    \State\hspace{\algorithmicindent}$S$: array $[1,\ldots]$ of registers (each initially NULL)
    \State\hspace{\algorithmicindent}$T$: array $[1,\ldots]$ of test\&set objects (each initially NULL)

    \Function{enqueue}{$x$}
      \State $t \gets range.\mathsf{fetch\&add}(1)$
      \State $N \gets$ random number 
      \State $S[t].\mathsf{write}(N)$
      \State $R[t].\mathsf{write}(x + N)$ 
    \EndFunction

    \Function{dequeue}{$~$}
      \State $t \gets 0$; $emptied \gets 0$
      \Repeat
        \State $t_{old} \gets t$
        \State $t \gets range.\mathsf{fetch\&add}(0) - 1$
        \State ${emptied}_{old} \gets {emptied}$; ${emptied} \gets 0$
        \For{ $i$ from $1$ upto $t$}
          \State $y \gets R[i].\mathsf{read}()$
          \If{$y \neq \mathrm{NULL}$}
            \If{$T[i].\mathsf{test\&set}()$}
                \State $N = S[i].\mathsf{read}()$
                \State \Return{$y-N$}
            \EndIf
            \State $emptied \gets emptied + 1$
          \EndIf
        \EndFor
      \Until{$t = t_{old}$ and $emptied = emptied_{old}$}
      \State \Return{EMPTY}
      \EndFunction
  \end{algorithmic}
\end{algorithm}

Our non-leaking queue is presented in Algorithm~\ref{alg:queue}.
The \textsc{dequeue} operation reads the current value of \textit{range} and iterates over the elements from the first one down to the tail. Unlike a \textsc{pop}, it cannot simply return EMPTY if it does not find an element to return. 
It can only do so if it observes that no new elements have been enqueued since its last traversal. Otherwise, it rereads the current value of the tail and restarts the traversal.

As in the AGM stack, both enqueues and dequeues claim slots in the infinite array. 
By using arguments similar to the proof of the stack algorithm
we can follow the proof of the original algorithm and show that 
it is linearizable and lock-free.
The argument-non-leaking property is also similar:
for each value $x$ that is not enqueued or dequeued by a process $p$,
we can construct an execution that is indistinguishable to $p$,
where no process enqueues or dequeues $x$.
This allows us to argue that for any concrete execution $\beta$ where the abstract specification
does not allow $p$ to learn some proposition $op(m, V)$,
there is an indistinguishable concrete execution $\beta' \sim_p \beta$ where $op(m, V)$ in fact
does not hold, and therefore $p$ does not learn $op(m, V)$ in $\beta$.

\section{Input-Non-Leaking Wait-Free Approximate Agreement}
\label{sec:approx_agreement}

Finally, we extend our framework to capture information leaking in the context of decision tasks, using approximate agreement as an example.
Algorithms for approximate agreement are often based on  successive halving phases, in each of which, 
processes write their current proposed values to the shared memory, collect all the values, and take their mean or median~\cite{AttiyaLS1994}.
This structure clearly allows processes to know the input values of 
other processes, even when they end up deciding on their own input. 
We show, however, that a slight adaptation of Schenk's approximate agreement algorithm~\cite{Schenk95} does not leak inputs. 

\subsection{Defining Input-Non-Leaking Approximate Agreement}

The framework of Section~\ref{sec:definitions} can be 
used to define non-leaking \emph{decision tasks}.
For simplicity, we concentrate on colorless tasks, 
and approximate agreement, in particular. 

A \emph{colorless} task  $T$ is a triplet $(\Delta,\m{I},\m{O})$, where  $\Delta$ maps each set of inputs in $\mathcal I$ to a set of allowed outputs in $2^{\mathcal O}$, subject to for any $\sigma \subset \sigma' \in \m{I}$, $\Delta(\sigma) \subseteq \Delta(\sigma')$.
Equivalently, $T$ may be specified as an object $O_T = (Q,\Sigma,s_0,\delta)$ with a single method (hence omitted). Intuitively, each state $q \in Q$ records inputs received and outputs returned, so that future return actions  output values allowed by $\Delta$, and consistent with past outputs. Formally, $\Sigma = Call \cup Ret$, $Q \subseteq \m{I} \times \m{O}$ with $s_0 = (\emptyset,\emptyset)$. For a call action $Call(v,\_)$, $(q,Call(v,\_),q') \in \delta$ if and only if $q = (I,O)$, $q' = (I',O')$ with $I' = I \cup \{v\}$ and $O = O'$. A value $v \in \m{O}$ may be returned given state $q = (I,O)$  if this is allowed by the task relation $\Delta$. That is,   $(q,Ret(v,\_),q') \in \delta$ if and only if $q = (I,O)$, $q' = (I',O')$ with $I' = I$, $O = O' \cup \{v\}$ and $O' \in \Delta(I')$. 
Note that $O_T = (Q,\Sigma,s_0,\delta)$ is a deterministic LTS.

Therefore, an algorithm $C$ solves $T$ if $C$ refines $O_T$. Refinements suitable for objects defined from tasks have been identified in~\cite{CastanedaRR18, Neiger94}. 

An algorithm $L$ solving a colorless task  $T = (\Delta, \m{I}, \m{O})$, for a set process $P$, 
is \emph{input-non-leaking} if $L$ is $\APin$-non-leaking, where $\APin$ is the following set of atomic propositions:
  \begin{equation*}
  \APin = \set{ 
    \medspace
    \exists p \in P \medspace
    \exists v \in \m{O} 
    \medspace.\medspace
    op(p, u, v) 
    \quad : \quad
    u \in \m{I}
    \medspace
  }   ,
\end{equation*}
and $op(p,u,v)$ holds in an execution $\gamma$ if process $p$ has input $u$ and outputs $v$.

In \emph{approximate agreement}, each process $p_i$ has an input $x_i$ in some interval $X$ and is required to output a value $y_i$ with 
$ \min(X) \leq y_i \leq \max(X)$
and such that all outputs are within distance $\epsilon$ of one another, where $\epsilon > 0$.

We can fix the input range to be  $[0,1]$ and $\epsilon = 1/N$ for some integer $N$.  
A fixed input range $[a,b]$ and $\epsilon$ can be handled 
by scaling the inputs (i.e, $\bar{x} = (x-a)/(b-a)$) so that they are in the interval $[0,1]$, setting $\bar{\epsilon} = \lceil\epsilon/(b-a)\rceil$, and returning a scaled output (i.e., $y = \bar{y}(b-a)  +a$).
An approximate agreement algorithm is input-non-leaking 
if it is $\APin$-non-leaking, 
for $\APin = \set{ 
    \medspace
    \exists p \in P \medspace
    \exists y \in [0,1] 
    \medspace.\medspace
    op(p, x, y) 
    \medspace : \medspace
    x \in [0,1]
    \medspace
  }$.

\subsection{Schenk's Algorithm (Slightly Modified)}

At the heart of Schenk's approximate agreement algorithm for known initial range~\cite{Schenk95} are \emph{halving objects} (Algorithm~\ref{alg:halving}).  
Given inputs drawn from a known interval $I = [a,b]$, 
a halving object $H_I$ with parameter $I$ allows processes to get outputs that are in the range of the inputs, 
and in one half of $I$. 
In other words, it is an implementation of approximate agreement with input range $I$ and $\epsilon = |I|/2$, with the additional \emph{partition property} that outputs are either all in the lower half of $I$, 
or all in the upper half of $I$. 
The algorithm uses two binary registers $R_0$ and $R_1$, which signal the existence of an input in the lower and  upper half of $I$ respectively. 
A process $p$ sets one of the registers to \emph{true}, according to which half of $I$ contains its input $x$. 
It then reads the other register, and returns $x$ if it is not set (indicating the absence of input in the half not containing its input), and the midpoint of $I$  otherwise.

  \begin{algorithm}[bt]
  \caption{Halving Object $H_{[a,b]}$} 
  \label{alg:halving}
  \begin{algorithmic}[1]
    \small
    \State\textbf{Shared variables}
    $R_0,R_1$ : binary multi-writer registers, initially $\emph{false}$. 

    \Function{halve}{$x$}\Comment{$x \in [a,b]$}
      \State \textbf{if} $x \leq (a+b)/2$ \textbf{then} $k \gets 0$ \textbf{else} $k \gets 1$
      \State $R_k.\mathsf{write}(\emph{true})$
      \State\label{l:halve:new} {\color{blue} \textbf{if} $x =  (a+b)/2$ \textbf{then} \Return{$x$}} 
      \State \textbf{if} $R_{1-k}.\mathsf{read}()$ \textbf{then} \Return{$(a+b)/2$} \textbf{else} \Return{$x$}
    \EndFunction
  \end{algorithmic}
\end{algorithm}

Line~\ref{l:halve:new} (marked in blue), \emph{was added to the original code, and it does not} affect correctness, as any $\mathsf{halve}$ operation with input $x = (a+b)/2$ returns  $x$ after writing to the register corresponding to the lower half of $[a,b]$. 
However, it prevents the invoking process from learning the existence of other operations.

\begin{lemma}[\cite{Schenk95}]
  \label{lem:halving_correct}
  Algorithm~\ref{alg:halving} with parameter $I = [a,b]$ implements approximate agreement with  input range $[a,b]$ and $\epsilon = (b-a)/2$ and ensures the \emph{partition property}: In  every execution, either all outputs are contained in $[a,(a+b)/2]$ or they are all contained  in $[(a+b)/2,b]$. 
\end{lemma}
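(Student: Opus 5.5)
Let $m=(a+b)/2$. The plan is to prove three properties of every execution of Algorithm~\ref{alg:halving}: validity, the partition property, and $\epsilon$-agreement. Agreement will follow from the partition property, so that is where the work lies.

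\emph{Validity.} Every output is either the operation's own input $x$ or $m$. Returning $x$ is trivially within the range of the inputs. An operation returns $m$ only if it did not take the shortcut of line~\ref{l:halve:new} and it read $R_{1-k}=\true$. Registers are initially $\false$, so some other operation with an input on the opposite side of $m$ must have written $R_{1-k}$ earlier. Say the returning operation has input $x\le m$ (so $k=0$). Then some input $x'>m$ exists, and $m\in[x,x']$. The case $x>m$ is symmetric, using an input $x'\le m$. So $m$ lies between the minimum and maximum of the actual inputs.

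\emph{Partition property.} The lower half is $[a,m]$ and the upper half is $[m,b]$. Any output equal to $m$, including the outputs from line~\ref{l:halve:new}, lies in both halves and can be ignored. So suppose for contradiction that one operation $p$ outputs some $y<m$ and another operation $q$ outputs some $y'>m$. Then $p$ returned its own input $y<m$, so $k_p=0$, and $p$ executed the final line and read $R_1=\false$. Likewise $q$ returned its own input $y'>m$ with $k_q=1$, and read $R_0=\false$. Each process writes its own register before reading the other one. If $p$'s write to $R_0$ precedes $q$'s read of $R_0$, then $q$ reads $\true$, which contradicts the assumption. Otherwise $q$'s read of $R_0$ precedes $p$'s write to $R_0$. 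Chaining this with the program order on each side, $q$'s write to $R_1$ precedes $q$'s read of $R_0$, which precedes $p$'s write to $R_0$, which precedes $p$'s read of $R_1$. So $p$ reads $\true$, again a contradiction. The main care point is this interleaving: it relies on atomicity of the registers and on the write-before-read order in each process's code.

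\emph{Agreement.} By the partition property, all outputs lie in a single half, which has length $(b-a)/2$. Hence any two outputs are within $\epsilon=(b-a)/2$ of each other.

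Finally, wait-freedom is immediate because each operation takes at most two steps. The refinement into $O_T$ follows from validity and agreement, since every output set satisfies $\Delta$ for the inputs already invoked.
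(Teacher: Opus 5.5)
Your proof is correct, but it takes a different route from the paper, which gives no proof of its own. The paper attributes the lemma to Schenk and justifies its modification only by remarking that line~\ref{l:halve:new} is harmless. In the original code, a call with input $(a+b)/2$ writes $R_0$ and then returns $(a+b)/2$ whatever it reads from $R_1$. Dropping that read changes neither the writes nor the outputs, so correctness carries over from the unmodified algorithm.

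You instead verify the modified code directly:
\begin{itemize}
\item validity, by tracing a returned midpoint back to a write of $R_{1-k}$ by an operation whose input lies on the other side of $(a+b)/2$;
\item the partition property, by the usual ``write your own register, then read the other'' interleaving argument (besides atomicity, this also uses the fact, evident from the code, that both registers are only ever written $\true$);
\item agreement, as a corollary of the partition property.
\end{itemize}
You absorb the shortcut by noting that $(a+b)/2$ lies in both halves.

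The paper's reduction is shorter and reuses Schenk's analysis as is. Your argument is self-contained. It also checks validity against the inputs invoked before each return, which is exactly what refinement into $O_T$ requires.
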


Since writing to register $R_k$, $k \in \{0,1\}$, 
only reveals which half of $I$ $p$'s input lies in,  Algorithm~\ref{alg:halving} is input-non-leaking.

\begin{lemma}
  \label{lem:halving_nl}
Let $\beta$ be an execution of   Algorithm~\ref{alg:halving} with parameter $I = [a,b]$ in which process $p$'s input is $x$. For every $x' \in [a,b]$, there is $\beta'\sim_p\beta$ in which no process has input $x'$.
\end{lemma}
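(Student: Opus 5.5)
The plan is an input-substitution argument. Process $p$ observes other processes only through the two binary registers $R_0,R_1$. The value a process writes depends only on which half of $I$ contains its input (its index $k$), not on the input itself. So changing inputs while preserving the indices $k$ leaves every write, and hence every read by $p$, unchanged. I assume $a<b$, so that each closed half $[a,(a+b)/2]$ and $((a+b)/2,b]$ contains infinitely many points (the case $a=b$ is degenerate). Fix $\beta$ and $x'$.

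\textbf{Other processes.} I first handle every process $q\neq p$ whose input in $\beta$ is $x'$. Let $k_q$ be the index computed by $q$. I choose a replacement input $x''_q\neq x'$ with the same index $k_q$: a point of $[a,(a+b)/2)$ if $k_q=0$, and a point of $((a+b)/2,b]$ if $k_q=1$. This choice is possible even when $x'=(a+b)/2$, since then $k_q=0$ and any $x''_q<(a+b)/2$ works. Call this step (i).

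I then build $\beta'$ by replaying the schedule of $\beta$ step by step. Process $q$ performs the same write $R_{k_q}.\mathsf{write}(\emph{true})$ at the same position. If $q$ now also reads $R_{1-k_q}$ (possibly where it previously returned at the new line), that read does not modify memory. Its return value may differ from $\beta$, but it is computed by Algorithm~\ref{alg:halving} and is therefore a legal output by Lemma~\ref{lem:halving_correct}. An induction over the steps shows that after each prefix the contents of $R_0,R_1$ coincide in $\beta$ and $\beta'$, because only writes change memory and every write is identical. Hence each read by $p$ returns the same value, and $p$'s steps are identical.

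\textbf{Process $p$ itself ($x'=x$).} This is the case the statement really targets, and I expect it to be the main obstacle. I will work with $p$'s view as its sequence of shared-memory steps together with its returned value, and split on how $p$ returns in $\beta$.
\begin{itemize}
\item If $p$ returns $(a+b)/2$ after reading $R_{1-k}=\emph{true}$, then $x\neq(a+b)/2$. I replace $p$'s input by some $x''\neq x$ in the same open half. Then $p$ writes the same register, reads the same register, sees \emph{true}, and returns $(a+b)/2$, so its view is unchanged.
\item If $p$ returns its input through the new line (so $x=(a+b)/2$), I intend to apply the same substitution idea, choosing an input $x''$ with $k=0$ whose steps write $R_0$. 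The difficulty is that the return value reveals $x$, so this case must be checked against exactly which actions are counted in $p$'s view; this is the delicate point.
\item If $p$ reads \emph{false} and returns $x$, the same issue arises.
\end{itemize}

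Finally, I will combine the substitution for $p$ with step (i) applied to all $q\neq p$. This gives $\beta'\sim_p\beta$ in which no process has input $x'$.
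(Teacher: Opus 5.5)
Your step (i) is correct, and on its own it proves the lemma in the only form in which the lemma can hold. Your route differs from the paper's. The paper splits on what $p$ observes. If $p$ reads \emph{false} from the register of the other half, then $\beta$ is indistinguishable to $p$ from a solo execution. If $p$ reads \emph{true}, every other process's input is replaced by a single value $x''\neq x'$ in the half opposite to $p$'s, so all other processes write exactly the register $p$ reads. Your substitution instead keeps every process in its own half. The sequence of writes is therefore unchanged, and so are the contents of $R_0,R_1$ after every prefix. This gives indistinguishability uniformly, without inspecting $p$'s read, and it carries over directly to executions in which $p$ performs several operations. The paper's solo case gives something slightly stronger: $p$ cannot even tell that other processes took steps. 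The one extra point your route needs is the case $x'=(a+b)/2$, where a replaced process takes an additional read of $R_1$. You handle it correctly, but you should say where that step is scheduled, e.g.\ immediately after its write.

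The part on $p$ itself is wrong and should be removed. By the paper's definition, $p$'s view includes its own call action $\mathit{call}(\mathsf{halve},x,p,k)$, which carries the argument $x$. So every $\beta'\sim_p\beta$ has $p$ invoking $\mathsf{halve}$ with input $x$. Your first bullet replaces $p$'s input by $x''\neq x$ and claims the view is unchanged; that claim is false. It only looks true because you redefined the view as shared-memory steps plus the return value, which is not the paper's notion. The other two bullets cannot be completed by any argument.

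For $x'=x$ the statement, read literally, is false. The lemma must be understood as saying that no process \emph{other than} $p$ has input $x'$, equivalently $x'\neq x$, as in Lemma~\ref{lem:sch:ind}. This is exactly what the paper proves when it replaces ``the input of every process but $p$''. It is also all that input-non-leakage needs. In the abstract execution $p$ already knows that some process has input $x$, namely itself, so the non-leaking condition is vacuous for that proposition. Contrary to your remark, this is not the case the statement targets. Drop it, and step (i) is a complete proof.
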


\begin{proof}
  Suppose, without loss of generality, that $x$ lies in the lower half of $I$. If $p$ reads false from $R_1$, $\beta$ is indistinguishable for $p$ from a solo execution. Otherwise, as $R_1$ is set to \emph{true} by any process with input $> (a+b)/2$, the input of every process but $p$ can be replaced by any value $x'' \neq x' > (a+b)/2$ without $p$ noticing it. 
\end{proof}

Schenk's algorithm (Algorithm~\ref{alg:sch}) proceeds in \emph{phases}.  
Each process $p$ maintains a preference $y_p$, initially its input. 
Each phase aims at halving the range of processes' preferences, using halving objects, until they are all contained in an interval of size $\leq \epsilon$. 
For clarity, assume $\epsilon = 1/2^N$ where $N$ is an integer. 
In each phase $\ell, 0 \leq \ell \leq N-1$, the input range $[0,1]$ is subdivided into intervals of the form $S_{k,\ell} = [k/2^\ell,(k+1)/2^\ell]$, 
for an integer $k$, $0 \leq k < 2^\ell$,
each with a designated halving object $H_{S_{k,\ell}}$. 
The following is maintained:  
All preferences at the beginning of phase $\ell$ are contained in a common sub-interval $S_{c,\ell}$. 

\begin{algorithm}[tbp]
  \caption{Schenk's approximate agreement, $\epsilon = 1/2^N$ for some integer $N$} 
  \label{alg:sch}
  \begin{algorithmic}[1]
    \small
    \State\textbf{Shared variables}
    For each  interval $S_{k,\ell} = [k/2^\ell,(k+1)/2^\ell]$: $0 \leq \ell \leq N, 0 \leq k < 2^\ell$:  halving object $H_{S_{k,\ell}}$ 
    
    \Function{aa$_\epsilon$}{$x$}\Comment{$x \in [0,1]$}
    \State $y \gets x$  \label{l:sch:init}
    \For{$\ell = 0,\ldots, N-1$} \label{l:sch:for}
    \State \textbf{let} $k = \min\{k' : x \in [k'/2^\ell,(k'+1)/2^\ell]\}$;  $d \gets H_{[k/2^\ell,(k+1)/2^\ell]}.\mathsf{halve}(y)$ \label{l:sch:interval_halve}
    \If{$d \neq 0$ and $d = y = (k+1)/2^\ell$ and $k+1 < 2^\ell$}
    $y \gets H_{\range{(k+1)}{(k+2)}{\ell}}.\textsf{halve}(y)$ \label{l:sch:again_halve}
    \Else{} $y \gets d$
    \EndIf
    \EndFor
    \State \Return{$y$}
    \EndFunction
  \end{algorithmic}
  \end{algorithm}

Reducing the range of preferences and maintaining the invariant is easy if every process is able to identify this common  sub-interval $S_{c,\ell}$. 
By invoking $H_{S_{c,\ell}}.\mathsf{halve}(y_p)$, each  process $p$ obtains  a new preference.
By the partition property, they  all lie in the same sub-interval $
S_{2c,\ell+1}$ or $S_{2c+1,\ell+1}$. 

For process $p$, if $y_p$ is strictly contained in $S_{k,\ell}$,  $c$ can be directly inferred  from $y_p$. The main difficulty is when the preference  $y_p$ of a process $p$  has the form $k/2^\ell$, which is in at the boundary of two sub-intervals $S_{k-1,\ell}$ and $S_{k,\ell}$. In that case, $p$ invokes both  halving objects $H_{S_{k-1,\ell}}$ and $H_{S_{k,\ell}}$, in that order.  It is observed in \cite{Schenk95} that only one of the two objects associated with these intervals may return a value $\neq k/2^\ell$. Indeed, if at the beginning of the phase, all preferences are contained in one of the two intervals, for example $S_{k,\ell}$, then every invocation of $H_{S_{k-1,\ell}}$ has the same input $k/2^\ell$ and thus returns this value.

\subsection{Proof of Correctness and Non-Leakage}

Let $\beta$ be an execution of Algorithm~\ref{alg:sch},
  and let $x_{min}$ and $x_{max}$ be the smallest and largest inputs in $\beta$. 
For every iteration $\ell, 0 \leq \ell \leq N-1$ of the for loop, and each process $p$, 
  $y_p^\ell$ is the preference of $p$ at the beginning of the iteration, 
  and $y_p^N$ is the value returned by process $p$. 
Correctness follows from Lemma 1 of~\cite{Schenk95}. 
The  proof of this lemma includes the next claim, 
which we use to prove that the algorithm is input-non-leaking.

\begin{lemma}[Lemma 1 in \cite{Schenk95}]
  \label{lem:sch:correct}
  For all $\ell, 0 \leq \ell \leq N$, there is an integer $c_\ell, 0\leq c_\ell < 2^\ell$ such that  for all processes $p$,  $y_p^\ell \in \range{c_\ell}{(c_\ell+1)}{\ell} \cap [x_{min},x_{max}]$.
\end{lemma}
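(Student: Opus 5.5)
We argue by induction on $\ell$, proving the invariant together with the fact that every process determines the ``right'' halving object in each phase. For $\ell = 0$ we take $c_0 = 0$: every preference $y_p^0$ equals the input $x_p \in [0,1] \cap [x_{min},x_{max}]$.

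For the inductive step, assume all $y_p^\ell$ lie in $S_{c,\ell} \cap [x_{min},x_{max}]$ with $c = c_\ell$. Partition the processes into two groups. Those whose preference lies in the interior of $S_{c,\ell}$ (or at an endpoint not shared with a neighbour) compute $k = c$ and invoke only $H_{S_{c,\ell}}$. Those whose preference lies on a boundary point shared with a neighbour interval need a second look:
\begin{itemize}
\item If the boundary point is $c/2^\ell$, they compute $k = c-1$ and invoke $H_{S_{c-1,\ell}}$ first.
\item If the boundary point is $(c+1)/2^\ell$, they compute $k = c$ and invoke $H_{S_{c,\ell}}$ first.
\end{itemize}

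Consider first the case $y = c/2^\ell$. Every invocation of $H_{S_{c-1,\ell}}$ in phase $\ell$ has input equal to $c/2^\ell$, the upper endpoint of $S_{c-1,\ell}$. Since the halving object only returns values in the range of its inputs (Lemma~\ref{lem:halving_correct}), it returns $c/2^\ell = y$. The test $d = y = (k+1)/2^\ell$ then succeeds, so the process invokes $H_{S_{c,\ell}}$ with $y$, exactly as the interior processes do. The case $d \neq 0$ is a guard for $c = 0$, which is handled directly.

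Now consider the case $y = (c+1)/2^\ell$ with $c+1 < 2^\ell$. The process first invokes $H_{S_{c,\ell}}$.
\begin{itemize}
\item If it obtains $d \neq y$, it adopts $d$ as its new preference. This is an output of $H_{S_{c,\ell}}$, consistent with all the others.
\item If it obtains $d = y$, it then invokes $H_{S_{c+1,\ell}}$.
\end{itemize}
This second branch is the main obstacle. We must show that whenever $H_{S_{c+1,\ell}}$ is invoked, its result is still compatible with the outputs of $H_{S_{c,\ell}}$. All inputs to $H_{S_{c+1,\ell}}$ equal its lower endpoint $(c+1)/2^\ell$, so by the range property it returns $(c+1)/2^\ell$. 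The final preference is therefore $y$ again, which is a value that $H_{S_{c,\ell}}$ itself returned.

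Consequently, every new preference $y_p^{\ell+1}$ is an output of $H_{S_{c,\ell}}$ in phase $\ell$, either directly or equal to one of its outputs. By the partition property of Lemma~\ref{lem:halving_correct}, all these outputs lie in a single half of $S_{c,\ell}$, namely $S_{2c,\ell+1}$ or $S_{2c+1,\ell+1}$; we let $c_{\ell+1}$ be the corresponding index. By the validity property of approximate agreement, each output lies within the range of the inputs to $H_{S_{c,\ell}}$. By the induction hypothesis those inputs are contained in $[x_{min},x_{max}]$, so the outputs are as well, which completes the inductive step.

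Two points need care. First, the order of invocation is not an issue: the argument above shows that outputs of the auxiliary objects $H_{S_{c\pm1,\ell}}$ are forced constants regardless of scheduling, and it does not rely on $H_{S_{c,\ell}}$ being invoked after them. Second, we must justify that processes never use different indices $k$ computed from $x$ rather than $y$. The text writes $x \in [k'/2^\ell,\dots]$, but the intended quantity is the current preference $y$, and we read it that way.
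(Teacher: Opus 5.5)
Your proof is correct and is essentially the argument the paper defers to Schenk. By induction, you show that every new preference equals the output of $H_{S_{c_\ell,\ell}}$ on $y_p^\ell$, which is exactly the claim recorded as Lemma~\ref{lem:sch:claim}: the neighbouring objects $H_{S_{c_\ell\pm 1,\ell}}$ only ever receive the shared boundary point as input and therefore return it, and the partition and validity properties of Lemma~\ref{lem:halving_correct} then yield $c_{\ell+1}$ and containment in $[x_{min},x_{max}]$. Your reading of $x$ as the current preference $y$ when computing $k$ is the right one, since with the literal input $x$ a process could call a halving object on an out-of-range value.
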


\begin{lemma}[Claim in the proof of Lemma 1 in \cite{Schenk95}]
  \label{lem:sch:claim}
  For all $\ell, 0 \leq \ell < N$, there is an integer $c_\ell, 0 \leq c_\ell < 2^\ell$, such that  for all processes $p$,  $y_p^{\ell+1}$ is the value returned to $p$ by applying $\mathsf{halve}(y_p^\ell)$ to the object $H_{\range{c_\ell}{(c_\ell+1)}{\ell}}$. 
\end{lemma}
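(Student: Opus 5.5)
We take $c_\ell$ to be the integer given by Lemma~\ref{lem:sch:correct}, so that every preference $y_p^\ell$ lies in $S_{c,\ell}=\range{c}{(c+1)}{\ell}$, writing $c=c_\ell$. We read the index $k$ in Line~\ref{l:sch:interval_halve} as computed from the current preference $y$, as the prose describing Algorithm~\ref{alg:sch} intends.

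The plan is a case analysis on where $y_p^\ell$ lies inside $S_{c,\ell}$. In each case we trace which halving objects $p$ invokes and show that the value finally assigned to $y$ is the one returned by $H_{S_{c,\ell}}.\mathsf{halve}(y_p^\ell)$. If $c$ is not unique, which happens when all preferences coincide at a boundary point, we fix either choice; the argument below works for any $c$ with all preferences in $S_{c,\ell}$.

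The key auxiliary observation concerns the two neighbouring objects $H_{S_{c-1,\ell}}$ and $H_{S_{c+1,\ell}}$. A process invokes $H_{S_{c-1,\ell}}$ only if either its minimal index is $c-1$, or (via Line~\ref{l:sch:again_halve}) its minimal index is $c-2$ with $y=(c-1)/2^\ell$. Since all preferences lie in $S_{c,\ell}$, only the first possibility can occur, and it forces $y=c/2^\ell$. So every invocation of $H_{S_{c-1,\ell}}$ has input exactly $c/2^\ell$.

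Symmetrically, every invocation of $H_{S_{c+1,\ell}}$ comes from Line~\ref{l:sch:again_halve} with $y=(c+1)/2^\ell$. It cannot come from a first call, because a preference in $S_{c,\ell}\cap S_{c+1,\ell}$ has minimal index $c$, not $c+1$. By Lemma~\ref{lem:halving_correct}, outputs lie in the range of the inputs, so both neighbouring objects return their unique input value.

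The cases are as follows.
\begin{itemize}
\item If $y_p^\ell$ lies strictly inside $S_{c,\ell}$, then $k=c$. The condition $d=y=(k+1)/2^\ell$ fails, so $y\gets d$, which is exactly the claim.
\item If $y_p^\ell=c/2^\ell$ with $c>0$, then $k=c-1$. By the observation, $d=y=(k+1)/2^\ell\neq 0$ and $k+1=c<2^\ell$, so $p$ executes Line~\ref{l:sch:again_halve} on $H_{S_{c,\ell}}$ with input $y_p^\ell$, and that return value becomes $y_p^{\ell+1}$.
\item If $y_p^\ell=c/2^\ell$ with $c=0$, then $k=0=c$ and we are back in the direct case.
\item If $y_p^\ell=(c+1)/2^\ell$, then $k=c$ and $d$ is returned by $H_{S_{c,\ell}}$. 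Either the guard fails and $y\gets d$, or $d=y$ and $p$ calls $H_{S_{c+1,\ell}}$. By the observation that call returns $y=d$, so again $y_p^{\ell+1}=d$.
\end{itemize}

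The main obstacle is the auxiliary observation. We must rule out that some other process feeds $H_{S_{c\pm1,\ell}}$ a different input, for instance through a second call from Line~\ref{l:sch:again_halve} by a process whose minimal index differs. This relies on the invariant of Lemma~\ref{lem:sch:correct} holding for all processes simultaneously at level $\ell$, so we would state it carefully, quantifying over all invocations in $\beta$, before doing the case analysis.
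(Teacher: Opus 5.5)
Your proof is correct and follows the reasoning the paper sketches; the paper defers the proof to Schenk and gives only the remark that every call to the neighbouring object $H_{S_{c_\ell-1,\ell}}$ has input $c_\ell/2^\ell$ and therefore returns it. You also handle $H_{S_{c_\ell+1,\ell}}$ symmetrically (every call has input $(c_\ell+1)/2^\ell$), you rightly read $k$ as computed from the current preference $y$, and since Schenk's proof of Lemma~\ref{lem:sch:correct} relies on this claim, a self-contained version should place your case analysis inside an induction on $\ell$: the invariant at $\ell$ gives the claim at $\ell$, which with the partition property gives the invariant at $\ell+1$.
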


\begin{lemma}
  \label{lem:sch:ind}
  Let $x$ be the input of $p$ in $\beta$. For every $x' \neq x$, there is an execution $\beta' \sim_p \beta$ in which no process has input $x'$.  
\end{lemma}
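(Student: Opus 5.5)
Since $p$'s view in Algorithm~\ref{alg:sch} consists only of the sequence of $\mathsf{halve}$ invocations it makes and the values they return, the plan is to build $\beta'$ by modifying the inputs of the other processes while leaving $p$'s interaction with every halving object unchanged. By Lemma~\ref{lem:sch:claim}, there is a sequence $c_0,\dots,c_{N-1}$ such that, in phase $\ell$, every process's new preference $y^{\ell+1}_q$ is the output of $\mathsf{halve}(y^\ell_q)$ on the common object $H_\ell := H_{\range{c_\ell}{(c_\ell+1)}{\ell}}$. Other halving objects can be touched only in the boundary case, and there they receive the same boundary input from everyone and hence return that input. This lets us reason phase by phase, with the outcome of each phase determined by one halving object.

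\textbf{Step 1 (pick replacement inputs).} Apply the idea of Lemma~\ref{lem:halving_nl} at the level of the whole algorithm. If $p$'s run is indistinguishable from a solo run, we take $\beta'$ to be $p$ running solo, which trivially contains no input $x'\neq x$. Otherwise, we replace every other process's input by one fixed value $z\neq x'$, chosen so that at each phase $\ell$ the register of $H_\ell$ that $p$ reads (or the side it needs to be set) is written exactly when it was in $\beta$. The natural candidate is to choose $z$ in the same dyadic cells as the original ``opposite'' processes' preferences. Since $p$ only observes whether the register $R_{1-k}$ of each halving object it accesses is set, and the partition property forces all preferences into $\range{c_{\ell+1}}{(c_{\ell+1}+1)}{\ell+1}$, it suffices to choose $z$ so that its trajectory of halving outputs lies on the side opposite to $p$ exactly at those phases where $p$ read \emph{true}. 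A value $z$ near $y^N_p$ or near the other extremal input works. Because an interval contains infinitely many candidates and only $x'$ is forbidden, a valid $z\neq x'$ always exists.

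\textbf{Step 2 (schedule).} Run $\beta'$ with the same schedule as $\beta$, with one change: at each phase, a representative other process (with input $z$) writes its register in $H_\ell$ before $p$ reads, exactly when some process did so in $\beta$. Check by induction on $\ell$ that $p$'s preference and reads coincide, which gives $\beta'\sim_p\beta$. The added line~\ref{l:halve:new} guarantees that when $p$'s preference equals a midpoint it reads nothing, so no constraint arises in that case.

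\textbf{Main obstacle.} The hard part is showing that a single replacement value (or a small set of values, all different from $x'$) reproduces exactly the pattern of \emph{true} reads $p$ saw, across all phases simultaneously. This matters especially when $p$'s preference sits on a dyadic boundary and it invokes two halving objects. I would first try taking $z$ to be the original input of some other process, perturbed slightly within its dyadic cell at level $N$ if it equals $x'$. Such a perturbation preserves every comparison against midpoints $k/2^\ell$ for $\ell\le N$, except when that input is itself a dyadic point. That case would need separate handling via the boundary rule of line~\ref{l:sch:again_halve}.
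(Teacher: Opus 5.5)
\textbf{There is a genuine gap.} The core of your Step~1, giving all other processes one common input $z\neq x'$, fails in general, and this is exactly the step you leave open. The obstruction is validity (Lemma~\ref{lem:sch:correct}). With inputs $\{x,z\}$, every value ever passed to a halving object lies in $[\min(x,z),\max(x,z)]$. But $p$'s reads can certify values on \emph{both} sides of $x$ at different phases.

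For instance, take $N\ge 2$ and let $p$ have input $0.6$, $a$ input $0.1$, and $b$ input $0.95$.
In phase~0, $b$ writes $R_1$ of $H_{[0,1]}$, reads $R_0$ as \emph{false} and keeps $0.95$. Then $a$ writes $R_0$, and $p$ writes $R_1$ and reads $R_0$ as \emph{true}, so its preference becomes $1/2$. Finally $a$ reads $R_1$ as \emph{true}.
In phase~1, $b$ writes $R_1$ of $H_{[1/2,1]}$. Then $p$ calls $H_{[0,1/2]}$ (reads \emph{false}, keeps $1/2$) and then $H_{[1/2,1]}$, where it reads $R_1$ as \emph{true}.

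Now consider any $\beta'\sim_p\beta$. The first \emph{true} read requires some other process with input $\le 1/2$: $H_{[0,1]}$ is only called in phase~0, with the caller's input, and $p$'s input makes it write $R_1$. So $z\le 1/2$. The second \emph{true} read requires a value $>3/4$ passed to $H_{[1/2,1]}$, which is impossible since every value is then $\le 0.6$.

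Your fallback (take $z$ to be another process's input, perturbed) is still a single value, so it fails on the same example, and it leaves the dyadic case open anyway. Also, ``the same schedule as $\beta$'' is not well defined once inputs change, because the other processes then touch different objects and registers. You must additionally guarantee that every register $p$ read as \emph{false} is still unset when $p$ reads it.

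\textbf{The missing idea} is to build $\beta'$ from scratch with auxiliary processes placed at different depths, which gives independent control of each phase. The paper takes the nested intervals $S^0\supset\cdots\supset S^N$ from Lemma~\ref{lem:sch:claim}. It uses fresh processes $q_0,\dots,q_N$ with inputs $x_\ell\neq x'$ in the interior of $S^\ell\setminus S^{\ell+1}$, and $x_N\neq x'$ in the interior of $S^N$.

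It first runs $q_N$ solo to completion; this sets, in each $H_{S^\ell}$, only the register of the half $S^{\ell+1}$. It then runs each $q_\ell$ until just before its write in $H_{S^\ell}$. At each earlier phase $\ell'$ we have $x_\ell\in S^{\ell'+1}$, so $q_\ell$ writes the same register as $q_N$ and keeps its input.

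Then $p$ is replayed phase by phase:
\begin{itemize}
\item If $p$'s preference lies in the half of $S^\ell$ other than $S^{\ell+1}$, it read \emph{true} in $\beta$ and reads \emph{true} again, thanks to $q_N$.
\item If its preference is the midpoint, it reads nothing (line~\ref{l:halve:new}).
\item If its preference lies in $S^{\ell+1}$, the single pending write of $q_\ell$ to the other register is executed first, exactly when $p$ read \emph{true} in $\beta$.
\end{itemize}
Halving objects other than $H_{S^\ell}$ are never touched by the $q_i$ and return $p$'s preference in both executions. Choosing interior points also removes your boundary/dyadic concern, since no $q_i$ ever holds a dyadic point of a level it uses.
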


\begin{proof}
  For each $\ell, 0 \leq \ell < N$,  let $S^\ell$ be the interval ${\range{c_\ell}{(c_\ell+1)}{\ell}}$ defined in Lemma~\ref{lem:sch:claim}. Denote by $S^N$  the half of $S^{N-1}$ to which all final preferences belong to (by the partition property of $H_{S^{N-1}}$, all $y_p^N$ lie in the same half of $S^{N-1}$).

  As any process $p$ invokes $\mathsf{halve}$ on object $H_{S^\ell}$ only if $y_p^\ell \in S^\ell$, $S^0 \supset \ldots \supset S^{N-1} \supset S^N$. 
  
  For the construction of $\beta'$, we use $N+1$ processes $q_0,\ldots, q_N$  with  input $x_0, \ldots, x_N$ respectively. For each $\ell, 0 \leq \ell < N$, we choose  $x_\ell \neq x'$ in the interior of $S^\ell \setminus S^{\ell+1}$, and $x_N \neq x'$  in the interior of $S^N$. Starting from the initial configuration, we let  $q_N$ run until it returns, and then run each process $q_i, 0, \leq i < N$ until it is about to apply $\mathsf{halve}$ on $H_{S^i}$. Denote by $\beta'_0$ this execution. Note that in its solo run, $q_N$  applies  $\mathsf{halve}(x_N)$ to each object $H_{S^0}, \ldots, H_{S^{N-1}}$ (as it is solo), and does not access any other halving objects (as $x_N$ is not at the boundary of any sub-interval $S_k^\ell$). Process $q_\ell$, where $\ell<N$, behaves similarly as $q_N$ in any iteration $\ell' <\ell$, getting back $x_\ell$ from its call to $H_{S^{\ell'}}.\mathsf{halve}()$.  Indeed, for every $\ell' < \ell$, $x_\ell$ and  $x_N$ lie in the same half of $S^{\ell'}$. $q_\ell$ thus writes to the same register of $H_{S^{\ell'}}$ as $q_N$, and, as $q_N$, see the other register in its initial state. This is because  every process $q_j$ with input in the other half of $S^{\ell'}$ stops running before accessing $H^{S^{\ell'}}$.

  We then extend $\beta'_0$ by adding  steps of $p$ and possibly one more step by $q_i, 0 \leq i < N$. Assume inductively that we have constructed prefix $\beta'_{\ell-1}$ and that it is indistinguishable for $p$ from $\beta$ up to the beginning of iteration $\ell$ of the for loop. 

  If $p$ invokes $H_S.\mathsf{halve}(y)$ with $S \neq S^\ell$ in its $\ell$th iteration of the for loop in $\beta$, the invocation returns $y$. 
  Otherwise, $p$ would not pass the test of line~\ref{l:sch:again_halve}, and would update its preference  $y$ with the  value returned. 
  By Lemma~\ref{lem:sch:claim}, this would imply that $S = S^\ell$. Since no process $q_i, 0\leq i \leq N$ modifies  $H_S$, $p$'s invocation  $H_S.\mathsf{halve}(y)$  also returns $y$ in $\beta'$. 
  
  By Lemma~\ref{lem:sch:claim}, every value returned by an invocation of $H_{S^\ell}.\mathsf{halve}()$ lies in the same half (which is $S^{\ell+1}$) of $S^\ell$, and this half contains  $x_N,\ldots,x_{\ell+1}$. Suppose without loss of generality that $S^{\ell+1}$ is the upper half of $S^\ell$. In the following, $R_0$ and $R_1$ refer to the registers in the implementation of $H_{S^\ell}$. Before $p$, only $q_N, \ldots, q_{\ell+1}$ call $H_{S^\ell}.\mathsf{halve()}$, each call with an input in $S^{\ell+1}$. Each of this call therefore writes to  $R_1$ and does not modify $R_0$. Hence $R_1 = true$ and $R_0 = false$ at the end of $\beta'_{\ell-1}$. Let $m$ be the midpoint of $S^\ell$. We consider three cases, depending on which half of $S^\ell$ $p$'s preference is.
    \begin{itemize}
    \item $y_p^\ell < m$. $p$'s invocation of $H_{S^\ell}.\mathsf{halve}$ in $\beta$  returns a value $\geq m$ (actually $m$). Hence, after  writing to $R_0$, $p$ reads $R_1 = \emph{true}$ in $\beta$. We let $p$ perform iteration $\ell$ after $\beta'_{\ell-1}$.  As $R_1 =  \emph{true}$ at the end of $\beta'_{\ell-1}$, $p$ takes the same step as in $\beta$.  
    \item $y_p^\ell = m $. In this case, according to the implementation of halving objects (Algorithm~\ref{alg:halving}, Line~\ref{l:halve:new})  $p$ writes to $R_0$ and does not read register $R_1$ in $\beta$. We can therefore extend $\beta'_{\ell}$ with the steps taken by $p$ in its iteration $\ell$ in $\beta$. 
    \item $y_p^\ell > m$. In $\beta$, $p$ writes to $R_1$ before reading $R_0$. If it reads $R_0 = false$ in $\beta$, which is also the value of $R_0$ at the end of $\beta'_{\ell-1}$,  we let $p$ perform iteration $\ell$ after $\beta'_{\ell}$. 
    If $p$ reads \emph{true} from $R_0$ in $\beta$, we first have $q_\ell$ perform its next step. Recall that $q_\ell$ is about to perform $H_{S^\ell}.\mathsf{halve}(x_\ell)$ with $x_\ell \in S_\ell \setminus S_{\ell+1}$. Therefore, $x_\ell$ lies in the lower half of $S_\ell$ and the next step of $q_\ell$ is writing \emph{true} to $R_0$.  We then let $p$ perform its $\ell$th iteration. \qedhere
    \end{itemize}
\end{proof}

Let $aop$ be an invocation by process $p$ with input $x$ that returns $v$, 
and $aop'$ be another invocation by process $p'$ with input $x'$. 
By Lemma~\ref{lem:sch:ind}
there is an execution $\beta' \sim_p \beta$ in which $aop'$ does not appear. 
As the algorithm is wait-free, and solves approximate agreement (by Lemma~\ref{lem:halving_correct}), we have:

\begin{theorem}
  \label{prop:appx:correct_and_nl}
   Algorithm~\ref{alg:sch} together with Algorithm~\ref{alg:halving} are a wait-free, 
   $\APin$-non-leaking implementation of approximate agreement with a known range, using only registers.
\end{theorem}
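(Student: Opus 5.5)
The theorem splits into three claims: wait-freedom, correctness as approximate agreement, and $\APin$-non-leakage. I will establish them separately, putting most of the work on non-leakage and reducing it to Lemma~\ref{lem:sch:ind}.

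\textbf{Wait-freedom and base objects.} Algorithm~\ref{alg:halving} executes at most one write and one read on binary registers, so each $\mathsf{halve}$ finishes in a constant number of steps. Algorithm~\ref{alg:sch} runs a fixed number $N$ of loop iterations, and each iteration calls $\mathsf{halve}$ at most twice. Hence every $\textsc{aa}_\epsilon$ completes in $O(N)$ of its own steps, whatever the other processes do, and only registers are used.

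\textbf{Correctness.} Line~\ref{l:halve:new} changes behaviour only when $x=(a+b)/2$. In that case the original code would also return $x$, since the process writes $R_0$ and then returns either $x$ or the midpoint, which equals $x$. So Lemma~\ref{lem:halving_correct} still applies to the modified halving object. Take Lemma~\ref{lem:sch:correct} with $\ell=N$: all outputs lie in $[c_N/2^N,(c_N+1)/2^N]\cap[x_{min},x_{max}]$. This gives validity, and since that interval has length $1/2^N=\epsilon$, it also gives $\epsilon$-agreement. Therefore the algorithm refines the task object $O_T$.

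\textbf{Non-leakage.} Fix a proposition $a_u=\exists p'\,\exists v.\,op(p',u,v)$ in $\APin$. Let $\beta$ be a concrete execution with $r(\beta)=\alpha$, let $p$ have no pending operation, and assume $\alpha\not\models\K[p][a_u]$. I must find $\beta'\sim_p\beta$ in which no process has input $u$.

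First I show $u\neq x$, where $x$ is $p$'s own input. In the abstract task object, $p$'s view includes its own call action with argument $x$. So if $u=x$, every $\alpha'\sim_p\alpha$ would satisfy $a_u$, giving $\alpha\models\K[p][a_u]$, a contradiction.

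With $u\neq x$, Lemma~\ref{lem:sch:ind} supplies $\beta'\sim_p\beta$ in which no process has input $u$. Then $\beta'\not\models a_u$, hence $\beta\not\models\K[p][a_u]$. If $p$ has not invoked $\textsc{aa}_\epsilon$ in $\beta$ at all, its view is empty, and the initial execution (no operations) can serve as $\beta'$.

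\textbf{Main obstacle.} The delicate step is the interface with Lemma~\ref{lem:sch:ind}. That lemma is stated for a single invocation and asserts only that no process has input $x'$. It must also be true that $\beta'$ is a genuine execution of the algorithm with valid inputs in $[0,1]$: the auxiliary processes $q_0,\dots,q_N$ are stopped before they return, so they are legitimately pending. The proposition $a_u$ refers to an operation with input $u$, whether or not it has returned. Since $\beta'$ contains no process with input $u$ in any state, pending ones included, $a_u$ fails in $\beta'$ under either reading of the proposition.

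I also need the lemma to hold when $p$ runs only one operation. Since approximate agreement is a one-shot task, each process invokes $\textsc{aa}_\epsilon$ at most once, so this is the only case to handle.
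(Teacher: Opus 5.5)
Your proposal is correct and follows the paper's route. Wait-freedom is immediate, correctness comes from Schenk's lemmas (with Line~\ref{l:halve:new} shown harmless), and $\APin$-non-leakage reduces to Lemma~\ref{lem:sch:ind}; beyond the paper's brief argument you also derive $u \neq x$ from $\alpha \not\models \K[p][a_u]$ and handle the case where $p$ never invokes \textsc{aa}$_\epsilon$, but the structure is the same.
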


The step complexity is $O(\log(1/\epsilon)$, and the space complexity is $O(\log^2(1/\epsilon))$. As observed by  Schenk,  space can be reduced to $O(\log(1/\epsilon))$  by dynamically assigning sub-intervals to halving objects, as only at most two such objects are used in each layer.



\section{Conclusions}

We introduced a framework for reasoning about information exposure in concurrent objects, using epistemic logic in labeled transition systems to compare what a process may know under an abstract specification with what it may know in a concrete implementation. 
A key novelty in our framework is the use of the abstract object itself as the reference for permissible information exposure.
Epistemic reasoning allows us to easily express different levels of information-hiding guarantees, 
while the use of labeled transition systems smoothly extends the framework to decision tasks. 

We showed that strong information-hiding guarantees are achievable for several concurrent objects while preserving linearizability 
and wait-freedom. 
At the same time, our impossibility for unbounded max registers shows that the strongest form of information hiding is not always achievable without compromise. 
Together, these results illustrate both the possibilities and the limitations of preserving information-hiding guarantees in concurrent objects.

Our work leaves open the design of additional non-leaking concurrent objects and their complexity, 
as well as other general approaches for doing so (like our value splitting technique), 
and the characterization of which concurrent objects admit non-leaking implementations. 
It would also be interesting to 
better understand how information-hiding guarantees can be preserved across layers of abstraction.

\bibliographystyle{plainurl}
\bibliography{references}


\appendix
\section{Wait-Free Unbounded Max Register from Responseless CAS}
\label{app:max-reg-cas}

The algorithm utilizes a \emph{responseless} CAS register $R$ (i.e., a CAS that does not return \emph{success} or \emph{failure}).
To write a new value $x$,
a process simply executes $R.\mathsf{CAS}(y,x)$ for each $y = 0,\ldots,x-1$ (in this order).
To read the max-register, a process reads $R$ and returns its value.
The code of the algorithm appears in Algorithm~\ref{alg:maxregcas}.

We prove that the algorithm is linearizable by assigning a linearization point to each operation. 
We say that a $R.\mathsf{CAS}(y,x)$ in the execution is \emph{successful} if it changes the value of $R$ from $y$ to $x$.
Let $wop$ be a \textsc{writeMax} operation. 
If $wop$ performs a successful CAS, we call the operation \emph{visible} and assign its linearization point to this successful step.
Otherwise, its linearization point is assigned to the last CAS executed in the loop, specifically the failed $R.\mathsf{CAS}(x, x-1)$ where $x$ is the argument passed to $wop$.
For a \textsc{readMax} operation $rop$, the linearization point is exactly its single execution step, which reads $R$.

Because each operation is linearized at a point between its invocation and response, this linearization trivially respects the real-time order. Furthermore, since a visible \textsc{writeMax} operation replaces a smaller value previously written by a different visible \textsc{writeMax} operation (or the initial value), the ordering of the visible \textsc{writeMax} operations and \textsc{readMax} operations clearly satisfies the sequential specification.

It remains only to show that linearizing the invisible \textsc{writeMax} operations also satisfies the sequential specification. This is established by the following proposition, which shows that a visible \textsc{writeMax} operation with an argument greater than or equal to the argument of the invisible \textsc{writeMax} must have been linearized beforehand.
It follows from the fact that $R$'s value is strictly increasing and that \textsc{writeMax} operations attempt to update $R$ in increasing order. 

\begin{proposition}
    If a \textsc{writeMax}($x$) operation performs an unsuccessful $R.\mathsf{CAS}(y,x)$ for $y < x$, then at that exact point, the value of $R$ must be strictly greater than $y$. 
\end{proposition}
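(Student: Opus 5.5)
The plan is to prove the proposition by induction over the execution, using two monotonicity facts about $R$ and the order in which a \textsc{writeMax} issues its CAS steps.

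\textbf{Monotonicity of $R$.} First we show that the value of $R$ never decreases; in fact every change of $R$ strictly increases it. A successful $R.\mathsf{CAS}(y,x)$ issued by a \textsc{writeMax}$(x)$ always has $y<x$, since the loop only ranges over $y=0,\ldots,x-1$. Hence it replaces $y$ by a strictly larger value $x$. Failed CAS steps and reads do not modify $R$. Since $R$ starts at $0$, this gives, by induction on the steps of the execution, that the sequence of values held by $R$ is non-decreasing.

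\textbf{The key claim.} Fix a \textsc{writeMax}$(x)$ operation $wop$ and consider its CAS steps $R.\mathsf{CAS}(0,x), R.\mathsf{CAS}(1,x),\ldots$, issued in this order. We prove by induction on $y$ the following statement: immediately after $wop$ executes $R.\mathsf{CAS}(y,x)$, whether or not that step succeeds, we have $R \ge y+1$.

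For the base of the induction, just before $R.\mathsf{CAS}(0,x)$ we have $R\ge 0$ trivially. For the step from $y-1$ to $y$, just before $R.\mathsf{CAS}(y,x)$ we have $R\ge y$. This holds by the induction hypothesis for the earlier step $R.\mathsf{CAS}(y-1,x)$, together with monotonicity of $R$ across the intervening steps of other processes. Now split on the outcome of $R.\mathsf{CAS}(y,x)$:
\begin{itemize}
\item If $R=y$ at that moment, the CAS succeeds and sets $R$ to $x\ge y+1$.
\item Otherwise $R\neq y$, so $R>y$, i.e., $R\ge y+1$.
\end{itemize}
In both cases the claim holds after the step.

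\textbf{Deriving the proposition.} Consider an unsuccessful $R.\mathsf{CAS}(y,x)$. At that point $R\ge y$ by the invariant, established either by the previous iteration or trivially when $y=0$. Since the CAS failed, $R\neq y$. Hence $R>y$, which is exactly the statement of the proposition.

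\textbf{Expected subtlety.} The only delicate point is that the invariant must hold at the moment just before the step, not merely right after $wop$'s previous CAS. Other processes' steps are interleaved between these two moments. This is handled entirely by the monotonicity of $R$, so the first step has to be established before the induction.
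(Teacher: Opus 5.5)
Your proof is correct and follows the same approach as the paper. The paper justifies the proposition by exactly your two ingredients: every change to $R$ strictly increases it, and each \textsc{writeMax}$(x)$ issues its CAS attempts in increasing order $y = 0, 1, \ldots, x-1$. Your induction on $y$ spells out the paper's one-line argument.
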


Finally, we prove that the algorithm is fully non-leaking:

\begin{lemma}
Algorithm~\ref{alg:maxregcas} is a fully-non-leaking implementation of an unbounded max register.  
\end{lemma}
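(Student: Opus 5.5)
We follow the methodology used for Lemma~\ref{lem:non-leaking-reg} and Lemma~\ref{lem:maxreg-fully-NL} and reduce to Lemma~\ref{lem:fnl reduction}. Let $\beta$ be an execution of Algorithm~\ref{alg:maxregcas}, and let $\alpha = r(\beta)$ be its linearization as defined above. Fix $\alpha' \sim_p \alpha$. We will construct a concrete execution $\beta' \sim_p \beta$ that contains exactly the operations of $\min_p(\alpha')$. The key observation is the only source of information available to $p$. A \textsc{writeMax} performs only responseless CAS steps, so its view consists of its own call, its own steps and its return, and it observes nothing about $R$. A \textsc{readMax} performs a single read of $R$ and returns the value it reads. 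Hence $p$'s view is determined entirely by the sequence of its own operations and the return values of its \textsc{readMax} operations.

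We build $\beta'$ by iterating over $p$'s operations in $\min_p(\alpha')$, in order. These are the same operations, with the same return values, as $p$'s operations in $\beta$. We maintain the invariant that, after each step of the construction, $R$ holds the maximum argument of the \textsc{writeMax} operations appended so far, or $0$ if there are none. This invariant holds whenever we append complete \textsc{writeMax} operations sequentially. A sequential \textsc{writeMax}$(x)$ executes $\mathsf{CAS}(y,x)$ for $y=0,\ldots,x-1$; if the current value $v$ is smaller than $x$, the step with $y=v$ succeeds, and otherwise every step fails. When $p$'s next operation is a \textsc{writeMax}, we append it as a complete solo execution. When $p$'s next operation is a \textsc{readMax} returning $x$, the sequential specification of $\min_p(\alpha')$ guarantees that a \textsc{writeMax}$(x)$ precedes it and that no preceding \textsc{writeMax} has a larger argument. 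All the \textsc{writeMax} operations appended so far come from $\min_p(\alpha')$ and precede this read, so their maximum is at most $x$. If the maximum already equals $x$, we append the read directly. Otherwise we first append a complete solo \textsc{writeMax}$(x)$ by the process $q \neq p$ that performs it in $\min_p(\alpha')$, which makes $R = x$, and then we append $p$'s read. In both cases $p$ reads $x$, exactly as in $\beta$.

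It remains to check two things. First, $\beta' \sim_p \beta$: $p$'s steps in $\beta'$ are the same primitive invocations with the same responses as in $\beta$, because reads return identical values and CAS steps have no responses. Second, $\beta'$ contains exactly the operations of $\min_p(\alpha')$. It contains no extra operations, since every operation we append belongs to $\min_p(\alpha')$, and no appended operation is appended twice. It misses none, by minimality: every operation of $\min_p(\alpha')$ not performed by $p$ exists only to justify the return value of some \textsc{readMax} by $p$. For such a read returning $x$, that justifying operation is exactly the \textsc{writeMax}$(x)$ we insert just before it, unless an operation with the same role has already been inserted, in which case the extra copy would be removable and so cannot appear in $\min_p(\alpha')$. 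Lemma~\ref{lem:fnl reduction} then yields the claim.

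The main obstacle is this minimality bookkeeping. We need to argue that the \textsc{writeMax} operations of other processes in $\min_p(\alpha')$ correspond one-to-one with the \textsc{readMax} returns of $p$ whose value strictly exceeds the preceding maximum. This requires using that $\min_p(\alpha')$ is a legal sequential max-register history. Any other write by $q\neq p$ could be deleted without changing any value returned to $p$, so by minimality no such write is present.
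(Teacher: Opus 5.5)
Your proposal is correct and follows essentially the same route as the paper: reduce to Lemma~\ref{lem:fnl reduction}, replay $p$'s operations of $\min_p(\alpha')$ in order, and insert a complete solo \textsc{writeMax}$(x)$ by the appropriate $q \neq p$ just before any \textsc{readMax} of $p$ returning $x$ whose value is not yet the current maximum. Your explicit invariant on $R$ and your minimality bookkeeping spell out what the paper asserts in one line each (that $R$ holds the maximal argument, and that minimality leaves no unmatched operations), without changing the argument.
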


\begin{proof}
  Let $\alpha,\beta$ be executions of respectively the abstract register and Algorithm~\ref{alg:maxregcas} such that $\alpha$ is the linearization of $\beta$, as defined above.
  Let $\alpha' \sim_p \alpha$. 
  We show that there exists a concrete execution $\beta' \sim_p \beta$ that contains exactly the operations in $\min_p(\alpha')$.
  By Lemma~\ref{lem:fnl reduction}, this implies that the algorithm is fully-non-leaking.

  We construct $\beta'$ by iterating over the operations of $p$ in $\min_p(\alpha')$, so that $\beta'$ contains only the operations from $\min_p(\alpha')$ that appear before the operation by $p$.
  Since $\min_p(\alpha') \sim_p \alpha$ and $\alpha$ is the linearization of $\beta$, both $\min_p(\alpha')$ and $\beta$ contain the exact same operations by process $p$, appearing in the exact same order. 

    If the next operation of $p$ is a \textsc{writeMax} operation, we simply append it to $\beta'$. Since each $\mathsf{CAS}$ returns no response, process $p$ cannot distinguish between these executions, despite the fact that each $\mathsf{CAS}$ attempt might yield a different outcome and modify the memory differently.

     If the next operation of $p$ is a \textsc{readMax} operation returning $x$,
    then since $\min_p(\alpha')$ respects the sequential specification, 
    there is a preceding \textsc{writeMax}$(x)$ in $\min_p(\alpha')$, and no preceding \textsc{writeMax} in $\min_p(\alpha')$ has argument $y > x$. 
    By construction, the latter also holds in $\beta'$.
    
    We consider the \textsc{writeMax} operation with the maximum argument (if it exists) in $\beta'$. If the argument for this \textsc{writeMax} is $x$, we simply append the \textsc{readMax} operation by $p$ to $\beta'$.
    Otherwise, there must be some \textsc{writeMax}$(x)$ operation by process $q \neq p$ that precedes the \textsc{readMax} operation in $\min_p(\alpha')$.
    We first append a complete \textsc{writeMax}$(x)$ by process $q$, and then append the \textsc{readMax} operation by $p$ to $\beta'$.

    Since $x$ is the maximal argument of any \textsc{writeMax} operation in $\beta'$, the algorithm guarantees that $R$ holds the value $x$ when $p$ reads it during its \textsc{readMax} operation.

    By construction, $\beta' \sim_p \beta$. In addition, due to the minimality of $\min_p(\alpha')$, any operation not executed by $p$ exists solely to justify a value returned by a \textsc{readMax} operation by $p$. Hence, $\min_p(\alpha')$ does not contain any operation that is not included in $\beta'$.
\end{proof}

 \begin{algorithm}[bt]
  \caption{Fully-non-leaking wait-free unbounded max register from responseless CAS}
  \label{alg:maxregcas}
  \begin{algorithmic}[1]
  \small
    \State\textbf{Shared objects}
    \State\hspace{\algorithmicindent}$R$: responseless CAS object, initialized to $0$
    \Statex
    \begin{minipage}[t]{0.45\textwidth}
    \Function{writeMax}{$x$}
        \For{$y$ from $0$ upto $x-1$}
            \State $R.\mathsf{CAS}(y,x)$
        \EndFor
    \EndFunction
    \end{minipage}
    \begin{minipage}[t]{0.45\textwidth}
    \Function{readMax}{$~$}
      \State \Return $R.\mathsf{read}()$
      \EndFunction
      \end{minipage}
  \end{algorithmic}
\end{algorithm}

\end{document}